\documentclass[lettersize,journal]{IEEEtran}
\usepackage[linesnumbered,ruled,vlined]{algorithm2e}
\usepackage{graphicx}
\usepackage{subcaption}
\usepackage{array}
\usepackage{caption}
\usepackage{textcomp}
\usepackage{stfloats}
\usepackage{url}
\usepackage{verbatim}
\usepackage{mathrsfs}
\usepackage{algpseudocode}
\usepackage{calligra}
\hyphenation{op-tical net-works semi-conduc-tor IEEE-Xplore}
\usepackage{cite}
\usepackage{amsmath,amssymb,amsfonts}
\usepackage{amsthm}
\usepackage{graphicx}
\usepackage{xcolor}
\usepackage{float}
\def\BibTeX{{\rm B\kern-.05em{\sc i\kern-.025em b}\kern-.08em
		T\kern-.1667em\lower.7ex\hbox{E}\kern-.125emX}}
\def\linspread{1}
\def\linspreadalgr{1}

\begin{document}
\captionsetup[figure]{name={Fig.},labelsep=period,font=footnotesize,justification=raggedright,singlelinecheck=off} 	
%
%
%

\title{Ambiguity Function Analysis and Optimization of Frequency-Hopping MIMO Radar with Movable Antennas}

\author{\IEEEauthorblockN{Xiang Chen,~\IEEEmembership{Student Member,~IEEE,} Ming-Min Zhao,~\IEEEmembership{Senior Member,~IEEE,} Min Li,~\IEEEmembership{Member,~IEEE,} Liyan Li,~\IEEEmembership{Member,~IEEE,} Min-Jian Zhao,~\IEEEmembership{Member,~IEEE,} and Jiangzhou Wang,~\IEEEmembership{Fellow,~IEEE}}
	\thanks{X. Chen, M. M. Zhao, M. Li, L. Li, and M. J. Zhao are with the College of Information Science and Electronic Engineering, Zhejiang University, Hangzhou 310027, China, and also with Zhejiang Provincial Key Laboratory of Multi-Modal Communication Networks and Intelligent Information Processing, Hangzhou 310027, China (e-mail: \{12231089, zmmblack, min.li, liyan\_li, mjzhao\}@zju.edu.cn). J. Wang is with the School of Information Science and Engineering, Southeast University, Nanjing 211119, China (e-mail: j.z.wang@seu.edu.cn).}
	}

\IEEEpubid{\begin{minipage}{\textwidth}\raggedright
	Copyright~\copyright~2025 IEEE. Personal use of this material is permitted. However, permission to use this material for any other purposes must be obtained from the IEEE by sending a request to pubs-permissions@ieee.org.
\end{minipage}}


\maketitle
\begin{abstract}
In this paper, we propose a movable antenna (MA)-enabled frequency-hopping (FH) multiple-input multiple-output (MIMO) radar system and investigate its sensing resolution. Specifically, we derive the expression of the ambiguity function and analyze the relationship between its main lobe width and the transmit antenna positions. In particular, the optimal antenna distribution to achieve the minimum main lobe width in the angular domain is characterized. We discover that this minimum width is related to the antenna size, the antenna number, and the target angle. Meanwhile, we present lower bounds of the ambiguity function in the Doppler and delay domains, and show that the impact of the antenna size on the radar performance in these two domains is very different from that in the angular domain. Moreover, the performance enhancement brought by MAs exhibits a certain trade-off between the main lobe width and the side lobe peak levels. Therefore, we propose to balance between minimizing the side lobe levels and narrowing the main lobe of the ambiguity function by optimizing the antenna positions. To achieve this goal, we propose a low-complexity algorithm based on the Rosen’s gradient projection method, and show that its performance is very close to the baseline. Simulation results are presented to validate the theoretical analysis on the properties of the ambiguity function, and demonstrate that MAs can reduce the main lobe width and suppress the side lobe levels of the ambiguity function, thereby enhancing radar performance.
\end{abstract}
	
\begin{IEEEkeywords}
		ambiguity function, frequency-hopping, MIMO radar, movable antenna, radar sensing.
\end{IEEEkeywords}

\section{Introduction}
The emergence of  frequency hopping (FH) multiple-input multiple-output (MIMO) radar  has garnered significant interest from both industry and academia, thanks to its enhanced capabilities compared to other MIMO radar waveform modalities\cite{9656537}. In FH-MIMO radar, the total bandwidth is divided into many sub-bands, and only a subset of them are used 
at a time. The sub-bands used at each antenna randomly varies over time \cite{9540344}, which leads to some major advantages, such as better security and enhanced anti-jamming capability.
Specifically,  the FH technology leverages confidential FH patterns to identify available frequencies, thereby presenting a significant challenge for eavesdroppers to intercept FH-based radar signals  \cite{shi_enhanced_2022}, and it is difficult for conventional jammers to perform frequency tracking and inject interference on this kind of signal \cite{9266402}.
Besides, adopting the FH strategy incurs negligible sensing performance degradation, as compared to using full bandwidth signals \cite{163565}. Currently, the application of FH-MIMO radar has garnered considerable attention in the field of dual-function radar and communication (DFRC) or integrated sensing and communication (ISAC) system \cite{9427572}, as FH-MIMO radar can incorporate information into the fast-time subpulses within a pulse repetition interval (PRI), thereby achieving a high data rate\cite{9969893}. Therefore, FH-MIMO radar holds great promise in diverse applications such as enhancing target detection accuracy in complex electromagnetic environments, improving anti-interference and anti-interception capabilities in military surveillance, enabling high-resolution imaging for remote sensing, and contributing to more reliable and efficient wireless communication networks \cite{shi_enhanced_2022,9266402,163565,9427572,9969893}.\IEEEpubidadjcol

FH code and waveform design are two critical factors affecting the FH-MIMO radar and communication performance, thus they have been thoroughly investigated recently in the literature \cite{chenMIMORadarAmbiguity2008,gogineniFrequencyHoppingCodeDesign2012,zhouFrequencyhoppingCodeOptimization2016,hanJointlyOptimalDesign2016,zhangDualFunctionMIMORadarCommunications2021,eedaraOptimumCodeDesign2020,hassanienDualfunctionMIMORadarcommunications2017,wangPhaseModulatedCommunications2020}.
In  \cite{chenMIMORadarAmbiguity2008}, a simulated annealing algorithm was proposed to optimize the FH code in order to reduce the side lobe levels of the ambiguity function.
In \cite{gogineniFrequencyHoppingCodeDesign2012}, the FH code was optimized by reducing the block 
coherence measure of the sensing matrix with less computational complexity. However, most coherence values are too small, which cannot be measured accurately in practical applications.
Therefore, a design criterion based on FH radar coincidence imaging was investigated in \cite{zhouFrequencyhoppingCodeOptimization2016}, focusing on minimizing the difference between the correlation matrix of the dictionary matrix and an identity matrix. Moreover, a game theory framework was studied in \cite{hanJointlyOptimalDesign2016} to achieve better sensing performance via joint design of the FH code and amplitude matrices. As the deployment of FH-MIMO radar in DFRC systems becomes more prevalent, the integration of communication functionality  may adversely impact the radar's sensing capabilities. To mitigate these effects, several optimization methods were proposed to optimize the FH matrices in DFRC systems  \cite{zhangDualFunctionMIMORadarCommunications2021,eedaraOptimumCodeDesign2020,hassanienDualfunctionMIMORadarcommunications2017,wangPhaseModulatedCommunications2020}. 

In terms of transmit waveforms, the geometry of the antenna array is an important factor that determines the system performance. Recently, several sparse arrays have been proposed to increase the degrees of freedom (DoFs) of MIMO radar, e.g., the minimum redundancy arrays (MRAs) \cite{moffetMinimumredundancyLinearArrays1968}, coprime arrays \cite{vaidyanathanSparseSensingCoPrime2011} and nested arrays \cite{palNestedArraysNovel2010}, etc. In \cite{chenMinimumRedundancyMIMO2008}, the authors proposed to use an MRA in monostatic MIMO radar systems, and it was demonstrated that the MRA attains excellent performance in the rejection of main lobe interference. In \cite{qinDOAEstimationMixed2014}, a nested MIMO system was exploited  to estimate the directions of arrival (DOAs) of some uncorrelated and coherent targets, where a sparsely located uniform linear array (ULA) was deployed at the receiver. However, the DoFs provided by irregular antenna array was not fully utilized in the above mentioned works as the antenna spacing therein is fixed after being manufactured, which limits their capability to adapt to different task requirements.

{\color{black}With the increasing demand for communication capacity in the next-generation wireless communication systems, optimizing the geometry of the antenna array for better communication performance has also garnered widespread interests \cite{9264694,wongFluidAntennaMultiple2022,10653737,10694739,zhuMovableAntennasWireless2023,maCapacityMaximizationMovable2023,10497534,10709885}.
	For example,  a fluid antenna system (FAS) was proposed to further explore the DoFs in the spatial domain by changing the antenna position more flexibly \cite{9264694,wongFluidAntennaMultiple2022,10653737,10694739}. By using conductive fluids as materials for antennas, the receive antenna position can be switched freely among all candidate ports over a fixed-length line, and thus the signal with the highest signal-to-noise ratio (SNR) can be received \cite{9264694}. A fluid antenna multiple access (FAMA) scheme was proposed in \cite{wongFluidAntennaMultiple2022} to support multiple transceivers with a single fluid antenna at each mobile user. By selecting the positions of the fluid antennas at different users, the favorable channel condition
	with mitigated interference can be obtained. 
	In \cite{10653737}, the design of position index modulation (PIM) for FAS was proposed to decrease the bit error rate (BER) while taking advantage of the rate gain in index modulation. The performance of physical layer security (PLS) in fluid antenna-aided communication systems under arbitrary correlated fading channels was investigated in \cite{10694739}, and it demonstrated that applying fluid antenna to physical layer security can offer higher security and reliability compared to traditional antenna systems. In summary,
	it was shown in \cite{9264694,wongFluidAntennaMultiple2022,10653737,10694739} that changing the antenna positions can efficiently improve the wireless channel capacity and physical layer security. }

Recently, a new MIMO communication system enabled by movable antennas (MAs) was proposed in \cite{zhuMovableAntennasWireless2023} and \cite{maCapacityMaximizationMovable2023} to reshape the MIMO channel matrix between the transceivers, such that more spatial DoFs can be exploited for enhancing the channel capacity. Specifically, by connecting the MAs to RF chains via flexible cables, the MA positions can be adjusted by controllers in real time,
such as stepper motors or servos \cite{hejresNullSteeringPhased2004,zhuravlevExperimentalSimulationMultistatic2015,basbugDesignSynthesisAntenna2017}. In \cite{10497534}, a general channel estimation framework for MA systems by exploiting the multi-path field response channel structure was proposed that can estimate the complete CSI between the Tx and Rx regions with a high accuracy. The MA-aided wideband communications employing orthogonal frequency division multiplexing (OFDM) in frequency-selective fading channels was investigated in \cite{10709885}, which outperformed conventional systems with fixed-position antennas (FPAs) under the wideband channel setup.
 Using MA array to enhance sensing capabilities in wireless sensing and ISAC systems has also garnered significant attention \cite{ma_movable_2024,10696953,khalili_advanced_2024}.  A wireless sensing system using MA array was proposed in \cite{ma_movable_2024}, where the authors proposed to optimize the  antenna positions to minimize the Cramér-Rao bound (CRB). In \cite{10696953},  a joint beamforming design and MA position  optimization problem  was considered, and an alternating optimization (AO) algorithm that combines  semidefinite relaxation (SDR) and successive convex approximation (SCA) techniques  was proposed to tackle this problem. In \cite{khalili_advanced_2024}, the authors investigated the resource allocation problem in ISAC systems exploiting MAs, and chance constraints were introduced and integrated into the sensing quality of service (QoS) framework to precisely control the impact of dynamic radar cross-section (RCS) variations. 

Although there have been numerous studies on the application of MAs in the field of communication and wireless sensing dominated by communication waveforms, the use of MA array in FH-MIMO radar systems remains relatively scarce, and it is not clear how MAs can improve its sensing performance and to what extent. Motivated by the above, we propose an MA-enabled FH-MIMO radar system in this paper and investigate the optimization of the MAs' positions to enhance the performance of the ambiguity function. The main contributions of this paper are summarized as follows:
\begin{itemize}
	\item First, to evaluate the potential of MA in improving the MIMO radar resolutions, we analyze the relationship between the antenna positions and main lobe width of the ambiguity function in the angular domain. The optimal antenna distribution for achieving the minimum main lobe width is identified, which is shown to be related to the antenna size, the antenna number and the target angle. Different from the angular domain, we present lower bounds of the ambiguity function in the Doppler and delay domains, which unveil that when the antenna size is sufficiently large, further increasing it will not provide additional radar performance gain. These analytical results demonstrate that compared to the Doppler and delay domains, optimizing the MA positions exhibits a more remarkable enhancement on the radar resolution performance in the angular domain.
	\item Second, we propose to achieve a good tradeoff between main lobe width and side lobe peak levels for the ambiguity function in these three domains by optimizing the MA positions, and an optimization problem is formulated accordingly. The considered problem is highly non-convex and thus very difficult to solve. To tackle this problem, we propose an efficient algorithm based on the Rosen's gradient projection method (RGPM), which is able to achieve comparable performance as the baseline but with much lower computational complexity.
	\item Third,  we show that  the proposed system has two major advantages over traditional FH-MIMO radar systems. First, the proposed system, with the aid of MAs, achieves an ambiguity function with lower side lobe levels and narrower main lobe width in the angular, Doppler and delay domains, indicating enhanced sensing resolution. Second, by flexibly adjusting the antenna positions, the radar can achieve improved performance in various tasks, e.g., if better resolution of target angles (velocity or distance) is required, the antenna positions can be adjusted to improve the performance of the ambiguity function in the angular (Doppler or delay) domain.
\end{itemize} 

The remainder of this paper is organized as follows. Section II describes the considered MA-enabled FH-MIMO radar system model and the corresponding ambiguity function. In Section III, we analyze the relationship between the ambiguity function and the antenna positions. In Section IV, we propose a low complexity algorithm to solve the antenna position optimization problem. Section V provides the numerical results and discussions. Finally, we conclude this paper in Section VI. 

\emph{Notations}: Scalars, vectors and matrices are respectively denoted by lower/upper case, boldface lower case and boldface upper case letters. For an arbitrary matrix $\mathbf{A}$, $\mathbf{A}^T$, $\mathbf{A}^*$ and $\mathbf{A}^H$ denote its transpose, conjugate and conjugate transpose respectively. $\|\cdot\|$ denotes the Euclidean norm of a complex vector, and $\lvert\cdot\lvert$ denotes the absolute value of a complex scalar. $\lceil \cdot \rceil$ represents the round-up operator. For a complex number $x$, $\Re\{x\}$ denotes its real part and $\angle x$ denotes its angle. $\mathbf{I}$ and $\mathbf{0}$ denote an identity matrix and an all-zero vector with appropriate dimensions, respectively. $\mathbb{C}^{n\times m}$ denotes the space of $n\times m$ complex matrices. 

\section{SIGNAL MODEL}
\subsection{MA-Enabled FH-MIMO Radar System}
As shown in Fig. \ref{figure:fig.1}, we consider an MA-enabled FH-MIMO radar system equipped with a colocated transmitter and receiver, which are comprised of linear arrays with $M_t$ and $M_r$ antennas, respectively. Since the transmit antenna array is the primary factor that affects the ambiguity function, we assume that the transmit antennas are movable while the receive antenna positions are fixed with half wavelength spacing for ease of analysis \cite{chenMIMORadarAmbiguity2008}.
\begin{figure*}[!ht]
	\centering
	\includegraphics[width=14cm]{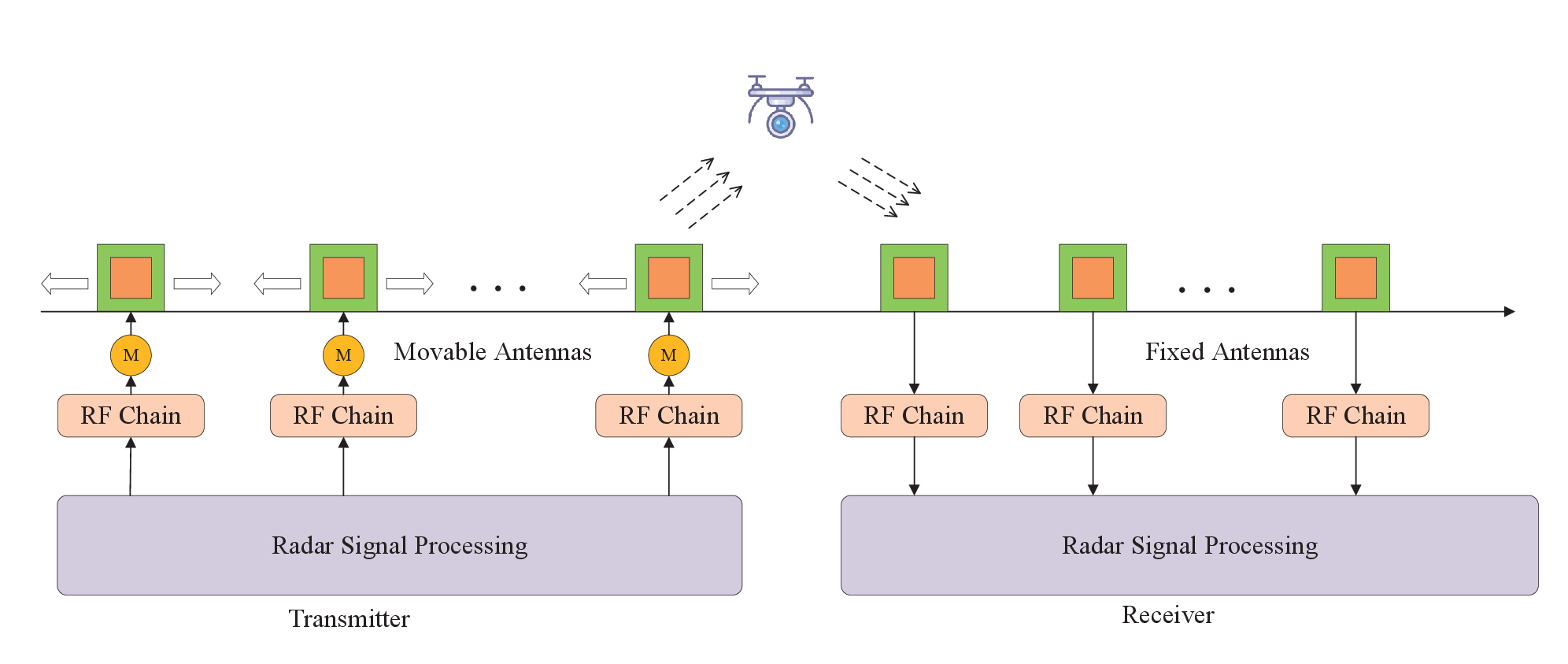}
	\caption{ Proposed MA-enabled FH-MIMO radar system.}
	\label{figure:fig.1}		
	\vspace{-20pt}
\end{figure*} 
Each MA is attached to an electrical machinery, such that the interval between two adjacent antennas can be dynamically adjusted \cite{maCapacityMaximizationMovable2023}. Let $d_{t,i}$ $(d_{r,i})$, $1\leq i\leq M_t-1$ $(1\leq i\leq M_r-1)$ denote the interval between the $(i-1)$-th and $i$-th transmit (receive) antenna and define  $d_{t,0}=0$, $d_{r,0}=0$, $d_{r,i}=\frac{\lambda}{2},i\in\mathcal{M}_r\triangleq[1,2,\cdots,M_r-1]$. Then, the transmit antenna position vector can be denoted by $\mathbf{x}_{t}=[x_{t,0},x_{t,1},\cdots,x_{t,M_t-1}]^T$, where $x_{t,0}=0$ and  $x_{t,m}=\sum_{i=0}^{m}d_{t,i},m\in \mathcal{M}_t\triangleq[1,2,\cdots,M_t-1]$. Similarly, the receive antenna position vector can be expressed as $\mathbf{x}_{r}=[x_{r,0},x_{r,1},\cdots,x_{r,M_r-1}]^T$, where $x_{r,0}=0$ and $x_{r,m}=\frac{\lambda}{2}m,m\in\mathcal{M}_r$. Accordingly, the steering vectors of the transmit and receive antenna arrays are  respectively given by $	\mathbf{a}(\mathbf{x}_t,\alpha)=[1,e^{j\frac{2\pi}{\lambda}x_{t,1}\sin\alpha},\cdots,e^{j\frac{2\pi}{\lambda}x_{t,M_t-1}\sin\alpha}]^T$ and $\mathbf{b}(\mathbf{x}_r,\alpha)=[1,e^{j\pi\sin\alpha},\cdots,e^{j\pi(M_r-1)\sin\alpha}]^T$,
 where $\lambda$ denotes the signal wavelength and $\alpha$ is the steering angle of the array.

Consider a target at $(\hat{\tau},\hat{v},\theta)$, where $\hat{\tau}$ denotes the delay corresponding to the target range, $\hat{v}$ is the Doppler frequency of the target and $\theta\in[-\frac{\pi}{2},\frac{\pi}{2}]$ represents the direction angle of the target. Then, the received signal can be represented by
\begin{equation}\label{eq:0412_3}
	\small
	\mathbf{y}_{\hat{\tau},\hat{v},\theta}(t)= \mathbf{a}(\mathbf{x}_t,\theta)^T\boldsymbol{\phi}(t-\hat{\tau})\mathbf{b}(\mathbf{x}_r,\theta)e^{j2\pi \hat{v}t}+\mathbf{n}(t),
\end{equation}
where $\boldsymbol{\phi}(t)=[\phi_0(t),\phi_1(t),\cdots,\phi_{M_t-1}(t)]^T$, $\boldsymbol{n}(t)=[n_0(t),n_1(t),\cdots,n_{M_r-1}(t)]^T$, $\phi_m(t)$ represents the FH waveform transmitted from the $m$-th transmit antenna and $n_m(t)$ denotes the Gaussian noise received by the $m$-th receive antenna.

As the FH waveform, the pulse width $T_w$ is divided into $Q$ sub-pulses of width $\Delta_t=T_w/Q$ each \cite{9681340}. Therefore, the $m$-th FH waveform during each pulse can be further expressed as\cite{chenMIMORadarAmbiguity2008}
\begin{equation}\label{eq:2}
	\small
	\phi_m(t)=\sum_{q=0}^{Q-1}e^{j2{\pi}{c_{m,q}}{\Delta}_ft}s(t-q{\Delta}_t),
\end{equation}
where $c_{m,q} \in \mathbb{K}$ is the FH code with $\mathbb{K} \triangleq \{1,2,\cdots,K\}$ being the set of available hops, $\Delta_f$ represents the frequency hopping interval and $s(t)$ represents the pulse function which is defined as
\begin{equation}\label{eq:3}
	\small
	s(t) = \left\{
	\begin{aligned}
		&1,\quad 0<t<\Delta_t, \\
		&0,\quad{\mathrm{otherwise}}.
	\end{aligned}
	\right.
\end{equation}
Note that the waveforms in an FH-MIMO radar system  are required to be orthogonal for zero Doppler and zero delay (see \cite{chenMIMORadarAmbiguity2008}), thus the condition
$c_{m,q}\neq c_{m',q}, \forall q, m\neq m'$
must be satisfied during each sub-pulse that comprises the radar pulse. This implies that the transmit antenna number $M_t$ that can be employed is upper bounded by the hop number $Q$. In this paper, our main focus is to investigate the radar performance enhancement brought by MA, and the FH code is designed by adopting the method presented in \cite{eedaraOptimumCodeDesign2020}.

\subsection{Ambiguity Function}
\vspace{-0.1cm}
For conventional single-input multiple-output (SIMO) radar systems, the following ambiguity function is usually employed to characterize the radar resolution performance\cite{levanon2004radar}:
\begin{equation}\label{eq:0414_1}
	\small
	\lvert\chi(\tau,v)\lvert\triangleq\bigg\lvert\int_{-\infty}^{\infty}u(t)u^*(t+\tau)e^{2\pi vt}dt\bigg\lvert,
\end{equation}
where $u(t)$ denotes the radar waveform. This two-dimensional function indicates the matched filter output at the receiver in the presence of Doppler mismatch $v$ and delay mismatch $\tau$. Such a concept is then extended to MIMO radar systems in\cite{sanantonioMIMORadarAmbiguity2007a}. Based on these existing works, we introduce the definition of ambiguity function in the considered MA-enabled FH-MIMO radar system in the following. 

Specifically, the receiver is designed to capture the target signal using a matched filter with assumed parameters $(\hat{\tau}',\hat{v}',\theta')$, where $\theta'\in[-\frac{\pi}{2},\frac{\pi}{2}]$. The reference signal of the matched filter is defined as $\mathbf{g}_{\hat{\tau}',\hat{v}',\theta'}(t)\triangleq\mathbf{a}(\mathbf{x}_t,\theta')^T\boldsymbol{\phi}(t-\hat{\tau}')\mathbf{b}(\mathbf{x}_r,\theta')e^{j2\pi \hat{v}'t}$, thus based on the received signal model given in \eqref{eq:0412_3}, the matched filter output can be expressed as
\begin{equation}	\label{eq:12}
	\small
	\begin{aligned}
		&\int_{-\infty}^{+\infty}\mathbf{g}_{\hat{\tau}',\hat{v}',\theta'}^H(t)\mathbf{y}_{\hat{\tau},\hat{v},\theta}(t)dt
		\\&=\begin{aligned}[t]
			&\bigg(\sum_{n=0}^{M_r-1}e^{j2\pi(\sum_{i=0}^{m}d_{r,i}\sin\theta-\sum_{i=0}^{m'}d_{r,i}\sin\theta')}\bigg)\\&T(\hat{\tau},\hat{\tau}',\hat{v},\hat{v}',\theta,\theta')+\int_{-\infty}^{+\infty}\mathbf{g}_{\hat{\tau}',\hat{v}',\theta'}^H(t)\mathbf{n}(t)dt,
		\end{aligned} 
	\end{aligned}
\end{equation}
where
\begin{equation}\label{eq:14}
	\small
	\begin{aligned}
		&T(\hat{\tau},\hat{\tau}',\hat{v},\hat{v}',\theta,\theta')\\&\triangleq\begin{aligned}[t]\sum_{m=0}^{M_t-1}\sum_{m'=0}^{M_t-1}
			&\int_{-\infty}^{+\infty}\phi_m(t-\hat{\tau})\phi^*_{m'}(t-\hat{\tau}')e^{j2\pi(\hat{v}-\hat{v}')t}dt\\&e^{j2\pi(\sum_{i=0}^{m}d_{t,i}\sin\theta-\sum_{i=0}^{m'}d_{t,i}\sin\theta')/\lambda}.
		\end{aligned}
	\end{aligned}
\end{equation}
In \eqref{eq:12},  $\mathbf{y}_{\hat{\tau},\hat{v},\theta}(t)$ represents the received signal reflected from the target at $(\hat{\tau},\hat{v},\theta)$, $\mathbf{g}_{\hat{\tau}',\hat{v}',\theta'}(t)$ represents the reference signal of the matched filter, and $\int_{-\infty}^{+\infty}\mathbf{g}_{\hat{\tau}',\hat{v}',\theta'}^H(t)\mathbf{y}_{\hat{\tau},\hat{v},\theta}(t)dt$ represents the coherent integration process between the received and  reference signals in the matched filter. The first term on the right-hand side of \eqref{eq:12} denotes the coherent
integration result between the echo  and  reference
signals, while $\int_{-\infty}^{+\infty}\mathbf{g}_{\hat{\tau}',\hat{v}',\theta'}^H(t)\mathbf{n}(t)dt$ denotes the coherent integration result between the reference signal and the noise. In \eqref{eq:14}, $\phi_m(t-\hat{\tau})e^{j2\pi\hat{v}t}dte^{j2\pi(\sum_{i=0}^{m}d_{t,i}\sin\theta)/\lambda}$ represents the echo signal transmitted by the 
$m$-th transmit antenna and reflected by the target at $(\hat{\tau},\hat{v},\theta)$, and $\phi^*_{m'}(t-\hat{\tau}')e^{j2\pi\hat{v}'t}dte^{j2\pi(\sum_{i=0}^{m'}d_{t,i}\sin\theta')/\lambda}$ represents the reference signal transmitted by the ${m'}$-th transmit antenna with assuming parameters $(\hat{\tau}',\hat{v}',\theta')$. From \eqref{eq:12} and \eqref{eq:14}, we can observe that there is a strong coupling effect between the radar's transmit MA array and the radar waveform, while the receive antenna array appears as an independent multiplier term that is in general unrelated to the radar waveform. Therefore, we define the ambiguity function in the proposed MA-enabled FH-MIMO radar system as
\begin{equation}\label{eq:1208_1}
	\small
	\begin{aligned}
		&\chi(\tau,v,\theta,\theta')\\&\triangleq\begin{aligned}[t]\sum_{m=0}^{M_t-1}\sum_{m'=0}^{M_t-1}
			&\int_{-\infty}^{+\infty}\phi_m(t)\phi^*_{m'}(t+\tau)e^{j2\pi vt}dt\\&e^{j2\pi(\sum_{i=0}^{m}d_{t,i}\sin\theta-\sum_{i=0}^{m'}d_{t,i}\sin\theta')/\lambda},
		\end{aligned}
	\end{aligned}
\end{equation}
 which indicates how the waveforms $\phi_m(t)$'s and the transmit MA array affect the angular, Doppler and range resolutions.\footnote{As observed from \eqref{eq:1208_1}, the angular ambiguity is related to $d_{t,i}\sin\theta$ and $d_{t,i}\sin\theta'$, where $i=0,1,\cdots,M_t-1$. This implies that it cannot be simply characterized by the spatial frequency mismatch $\delta_f$, where $\delta_f \triangleq \sin\theta - \sin\theta'$. Consequently, we retain the use of $\theta$ and $\theta'$ to describe the angular ambiguity throughout this work.} 
Since the ambiguity function represents the radar's ability to distinguish between echo signals under two sets of target parameters in the absence of noise \cite{sanantonioMIMORadarAmbiguity2007a}, the impact of noise on the matched filter output, i.e., $\int_{-\infty}^{+\infty}\mathbf{g}_{\hat{\tau}',\hat{v}',\theta'}^H(t)\mathbf{n}(t)dt$, has been neglected in \eqref{eq:1208_1}.
Then, by substituting the FH waveform in \eqref{eq:2} into $\chi(\tau,v,\theta,\theta')$, we can obtain the following new form of the ambiguity function:
\begin{equation}\label{eq:15}
	\small
	\begin{aligned}
		&\chi(\tau,v,\theta,\theta')\\
		&=\!\!\!\!
		\begin{aligned}[t]\sum_{m,m'\!=0}^{M_t\!-\!1}\sum_{q,q'\!=0}^{Q-1}&\chi^{r}(\tau\!-\!(q'\!-\!q){\Delta}_t,v\!-\!(c_{m'\!,q'}-c_{m\!,q})\Delta_f)\\&e^{j2{\pi}{\Delta}_f(c_{m,q}-c_{m',q'})q{\Delta}_t}e^{-j2{\pi\Delta}_fc_{m',q'}\tau}\\&e^{j2\pi(\sum_{i=0}^{m}d_{t,i}\sin(\theta)-\sum_{i=0}^{m'}d_{t,i}\sin(\theta'))/\lambda},
		\end{aligned}
	\end{aligned}
\end{equation} 
where
\begin{equation}\label{eq:16}
	\small
	\begin{aligned}
		&\chi^{r}(\tau,v) \triangleq \! \int_{0}^{{\Delta}_t}s(t)s(t+\tau)e^{j2{\pi}vt}dt\\
		&=\!  \left\{\begin{aligned}
			&\frac{{\Delta}_t\! -\!\!  \lvert\tau\lvert}{{\Delta}_t}e^{j{\pi}v(\Delta_t -\! \tau)}{\rm sinc}(v(\Delta_ t\! -\! \lvert\tau\lvert)),\quad \lvert\tau\lvert<\Delta_t,\\
			&0,\quad\quad\quad\quad\quad\quad\quad\quad\quad\quad\quad\quad\quad\quad\quad \mathrm{otherwise}.
		\end{aligned}
		\right.
	\end{aligned}
\end{equation}
It is observed from \eqref{eq:15} that the ambiguity function $\chi(\tau,v,\theta,\theta')$ is in the form of the sum of several complex numbers, and the antenna positions will affect the phase of each complex number, thereby influencing the main lobe width and side lobe levels, which will ultimately reflect on the radar performance. Besides, from \eqref{eq:16}, we can see that the FH code not only affects the phase of each complex number but also its amplitude.
Thus, both the antenna distribution and FH code will impact the ambiguity function. However, as the influence of the FH-code on the radar system performance has been investigated before in \cite{chenMIMORadarAmbiguity2008,eedaraOptimumCodeDesign2020} while the relationship between the ambiguity function and antenna distribution is still not clear. Therefore, in this paper, we will focus our discussion on the latter in this work. The FH code design follows the approach in \cite{eedaraOptimumCodeDesign2020}.
\section{ambiguity function analysis}\label{section3}
In this section, we theoretically analyze the relationship between $\lvert\chi(\tau,v,\theta,\theta')\lvert$ and the transmit antenna positions in $\mathbf{d}=[d_{t,1},\cdots,d_{t,M_t-1}]^T$ across the angular, Doppler and delay domains. First, we identify the optimal antenna distribution that minimizes the main lobe width of the ambiguity function in the angular domain. A closed-form expression of the minimum main lobe width is presented as a function of the target angle, antenna size and  antenna number. Second, we observe that the Doppler frequency offset and time delay error affect both the phase and amplitude of the terms in $\chi(\tau,v,\theta,\theta')$, which is very different from the angle error that only affect the phase of them. This poses significant challenges to obtain the optimal antenna distribution that achieves the minimum main lobe width in the Doppler and delay domains. Therefore, we investigate the lower bounds of the ambiguity function in these two domains, and some useful insights are obtained accordingly.
\subsection{Angular Domain}\label{section3-A}
In order to analyze the characteristics of the ambiguity function in the angular domain, we assume that the target’s distance and velocity are accurately estimated, i.e., $\tau=0$ and $v=0$, and focus on $\chi(0,0,\theta,\theta')$ which can be expressed as
\begin{equation}\label{eq:0928_1}
	\small
		\begin{aligned}
			&\chi(0,0,\theta,\theta')\\
			&=\!\!\!\!
			\begin{aligned}[t]\sum_{m,m'\!=0}^{M_t\!-\!1}\sum_{q,q'\!=0}^{Q-1}&\chi^{r}(0\!-\!(q'\!-\!q){\Delta}_t,0\!-\!(c_{m'\!,q'}-c_{m\!,q})\Delta_f)\\&e^{j2{\pi}{\Delta}_f(c_{m,q}-c_{m',q'})q{\Delta}_t}\\&e^{j2\pi(\sum_{i=0}^{m}d_{t,i}\sin(\theta)-\sum_{i=0}^{m'}d_{t,i}\sin(\theta'))/\lambda}.
			\end{aligned}
		\end{aligned}
\end{equation}
As can be seen from \eqref{eq:0928_1} the phase information of the last exponential term is a linear combination of $d_{t,i}\sin\theta$ and $d_{t,i}\sin\theta'$, $i\in\mathcal{M}_t$, thus, we can infer that the antenna distribution has a strong impact on the angular resolution of the considered system.  Further, based on the orthogonality of the FH-MIMO radar waveform, the angular domain ambiguity function in \eqref{eq:0928_1}  can be simplified to
 \begin{equation}\label{eq:20}
 	\small
	 	\begin{aligned}
		 		&\chi(0,0,\theta, \theta')= \sum_{m=0}^{M_t-1}e^{j2\pi(\sin \theta-\sin\theta')\sum_{i=0}^{m}d_{t,i/\lambda}}.
		 	\end{aligned}
\end{equation}
According to the definition of the ambiguity function, narrower main lobe width usually implies higher resolutions of the radar system. Therefore, we aim to derive the optimal antenna distribution that achieves the minimum main lobe width $B_{min}$ of $\lvert\chi(0,0,\theta, \theta')\lvert$, and investigate the impacts of the transmit antenna number $M_t$, the transmit antenna size $L$ and the target direction angle $\theta$ on $B_{min}$. The main results are given in the following theorem.
\newtheorem{theorem}{\bf Theorem}
\begin{theorem}\label{thm1}
For a given $L$, letting $\tau =0,v=0$, the optimal antenna distribution $\mathbf{d}^*$ that achieves the minimum main lobe width (defined as the distance between the first two null points) is given by
\begin{equation}\label{eq:0406_2}
	\small
	d_{t,i}=\left\{\begin{aligned}
		&{\lambda}/{2},\quad i=1,2,\cdots,\bigg\lceil(M_t-2)/{2}\bigg\rceil\\
		&{\lambda}/{2},\quad i=\bigg\lceil{M_t}/{2}\bigg\rceil+1,\bigg\lceil{M_t}/{2}\bigg\rceil+2,\cdots,M_t-1\\
		&L-(M_t-2){\lambda}/{2},\quad i= \bigg\lceil{M_t}/{2}\bigg\rceil\\
	\end{aligned}
	\right
	..
\end{equation}
 This minimum main lobe width has a closed-form expression, which is 
\begin{equation}\label{eq:21}
	\small
	\begin{aligned}
		B_{min} =& \arcsin\bigg(\sin\theta+\frac{2}{4L/\lambda-M_t+2}\bigg)\\&-\arcsin\bigg(\sin\theta-\frac{2}{4L/\lambda-M_t+2}\bigg).
	\end{aligned}
\end{equation} 
\end{theorem}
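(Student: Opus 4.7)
The plan is to specialize the ambiguity function to the angular domain, verify achievability of the claimed spatial-frequency null $\xi^{*}=2/(4L/\lambda-M_t+2)$ under $\mathbf d^*$ by an explicit factorization of $\chi$, argue optimality against every other feasible spacing profile, and then convert $\xi^{*}$ into the angular width \eqref{eq:21} via the $\arcsin$ transformation.

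Setting $\tau=v=0$ in \eqref{eq:15} and using the FH waveform orthogonality already invoked between \eqref{eq:0928_1} and \eqref{eq:20} reduces $\chi$ to $\sum_{m=0}^{M_t-1} e^{j2\pi\xi x_{t,m}/\lambda}$ with $\xi\triangleq\sin\theta-\sin\theta'$ and $x_{t,m}=\sum_{i=0}^{m}d_{t,i}$. Under $\mathbf d^*$ the array splits into a left $\lambda/2$-spaced sub-ULA of size $\lceil M_t/2\rceil$ anchored at $0$ and a right $\lambda/2$-spaced sub-ULA of size $\lfloor M_t/2\rfloor$ anchored at $L$; for even $M_t$ the two sub-arrays are mirror images about $L/2$, so
\begin{equation*}
\chi(\xi)=A(\xi)+e^{j2\pi\xi L/\lambda}\,A^{*}(\xi),\quad A(\xi)=\sum_{k=0}^{M_t/2-1}e^{j\pi\xi k}.
\end{equation*}
Writing $A=Re^{j\varphi}$ and inserting the Dirichlet-kernel form of $A$ collapses the modulus to a product
\begin{equation*}
|\chi(\xi)|=2\Bigl|\tfrac{\sin(M_t\pi\xi/4)}{\sin(\pi\xi/2)}\Bigr|\,\Bigl|\cos\bigl(\tfrac{\pi\xi(4L/\lambda-M_t+2)}{4}\bigr)\Bigr|,
\end{equation*}
whose first zero comes from the cosine factor at $\xi^{*}=2/(4L/\lambda-M_t+2)$, which is strictly smaller than the sinc factor's first zero $4/M_t$ whenever the gapped profile is admissible (i.e., $L\geq(M_t-1)\lambda/2$). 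The odd-$M_t$ case uses the same two-cluster split with mildly asymmetric sub-ULA sizes.

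For the optimality direction, the strategy is two-fold. The Taylor expansion
\begin{equation*}
|\chi(\xi)|^{2}=M_t^{2}-\Bigl(\tfrac{2\pi\xi}{\lambda}\Bigr)^{2}\sum_{m<m'}(x_{t,m}-x_{t,m'})^{2}+O(\xi^{4})
\end{equation*}
shows that the initial rate of decay of $|\chi|^{2}$ near the origin is controlled by the empirical variance of $\{x_{t,m}\}$; subject to $x_{t,0}=0$, $x_{t,M_t-1}=L$ and $d_{t,i}\geq\lambda/2$, this variance is maximized exactly by $\mathbf d^*$, which pushes as many antennas as possible to the two endpoints. To upgrade this local statement into optimality of the exact first-null location, I plan a perturbation/exchange argument: transferring an amount $\epsilon>0$ of aperture out of the central gap of $\mathbf d^*$ into any other inter-antenna spacing, I would apply the implicit function theorem at the known simple zero $\xi^{*}$ to show that the first zero of $\chi$ strictly increases in $\epsilon$; combined with compactness of the feasible polytope and continuity of the first-zero map, this would yield global optimality. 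This exchange step is the main obstacle, because as the profile deforms the first zero could in principle migrate from the inter-cluster cosine factor to the intra-cluster sinc-like factor, and ruling this out demands a global rather than purely local analysis of $|\chi|$.

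Finally, once $\xi^{*}=2/(4L/\lambda-M_t+2)$ is established as the minimum first-null of $|\chi|$ over admissible $\mathbf d$, the two closest nulls in $\theta'$ sit at $\sin\theta'=\sin\theta\pm\xi^{*}$, and subtracting $\arcsin(\sin\theta-\xi^{*})$ from $\arcsin(\sin\theta+\xi^{*})$ gives the closed-form expression \eqref{eq:21} for $B_{\min}$.
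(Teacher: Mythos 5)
Your achievability computation is correct and in fact more explicit than what the paper provides: the two-cluster factorization $\chi(\xi)=A(\xi)+e^{j2\pi\xi L/\lambda}A^{*}(\xi)$, which collapses to $|\chi(\xi)|=2\bigl|\sin(M_t\pi\xi/4)/\sin(\pi\xi/2)\bigr|\cdot\bigl|\cos\bigl(\pi\xi(4L/\lambda-M_t+2)/4\bigr)\bigr|$, cleanly locates the first null of $\mathbf{d}^*$ at $\xi^{*}=2/(4L/\lambda-M_t+2)$ and confirms it precedes the Dirichlet factor's zero at $4/M_t$ for any admissible $L$; the paper merely states that the first null is obtained by setting a derivative to zero and omits the details. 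The final $\arcsin$ conversion to \eqref{eq:21} is also fine.

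The optimality direction, however, contains a genuine gap that you yourself flag. The Taylor expansion $|\chi(\xi)|^{2}=M_t^{2}-(2\pi\xi/\lambda)^{2}\sum_{m<m'}(x_{t,m}-x_{t,m'})^{2}+O(\xi^{4})$ controls only the curvature of $|\chi|^{2}$ at $\xi=0$; maximizing the pairwise squared-distance sum is a purely local statement and does not, by itself, bound the location of the first zero, since a competitor with smaller variance could still reach zero earlier through higher-order terms. The proposed repair --- perturbing the central gap, invoking the implicit function theorem at the simple zero $\xi^{*}$, and appealing to compactness --- is not carried out, and even if it were, monotonicity of the first zero along one family of deformations emanating from $\mathbf{d}^*$ would not by itself cover the entire feasible polytope of spacing profiles, nor rule out the migration of the first zero between the cosine and Dirichlet factors that you correctly identify as the obstruction. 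The paper closes exactly this gap by a different mechanism: for each fixed $\Delta\theta_a$ in the main lobe it chooses a coordinate system in which $\chi_{\mathbf{d}^*}$ has vanishing $x$-component by the mirror symmetry of $\mathbf{d}^*$, and then shows term by term, via the monotonicity of $\sin$ on the admissible phase intervals $[\psi_m^{\min}(\Delta\theta_a),\psi_m^{\max}(\Delta\theta_a)]$, that $|\chi_{\mathbf{d}}(\theta,\Delta\theta_a)|\geq|\chi_{\mathbf{d}^*}(\theta,\Delta\theta_a)|$ for every feasible $\mathbf{d}$; this pointwise domination immediately forces the first null of $\mathbf{d}^*$ to be the earliest. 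To complete your argument you would need either to execute a genuinely global version of the exchange step or to replace it with a pointwise-domination argument of this kind.
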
 
\begin{proof}
	
Let $\Delta\theta_{a}$ denote the angle error within the main lobe width, and  define $\chi_\mathbf{d}( \theta,\Delta\theta_{a})$ as the ambiguity function in the angular domain with the antenna distribution $\mathbf{d}$, i.e.,
\begin{equation}\label{eq:1020_1}
	\small
		\begin{aligned}
		&\lvert\chi_\mathbf{d}(\theta,\Delta\theta_{a})\lvert\triangleq \bigg\lvert\sum_{m=0}^{M_t-1}a_{m}(\mathbf{d},\Delta\theta_{a})\bigg\lvert,
	\end{aligned}
\end{equation}
where $a_m(\mathbf{d},\Delta\theta_{a})\triangleq e^{j2\pi\psi_m(\mathbf{d},\Delta\theta_{a})},m=0,1,\cdots,M_t-1$
and
\begin{equation}\label{eq:0506_2}
	\small
	\psi_{m}(\mathbf{d},\Delta\theta_{a})\triangleq 2\pi(\sin\theta-\sin(\theta+\Delta\theta_{a}))\sum_{i=0}^{m}d_{t,i}/\lambda.
\end{equation}
To prove that $\mathbf{d}^*$ achieves the minimum main lobe width, it is equivalent to demonstrating that for any antenna distribution $\mathbf{d}$, the inequality $\lvert\chi_\mathbf{d}(\theta,\Delta\theta_{a})\lvert\geq\lvert\chi_{\mathbf{d}^*}(\theta,\Delta\theta_{a})\lvert$ holds.
Due to the symmetry of $\lvert\chi_\mathbf{d}(\theta,\Delta\theta_{a})\lvert$, we will focus on the case $\Delta\theta_a\leq0$ first, while the detailed proof for the case $\Delta\theta_a>0$ is omitted for brevity. Besides, as the proof process depends on whether $M_t$ is even or odd, we will divide it into two cases based on the parity of $M_t$.

For the case when $M_t$
is even, we further divide the proof process into the following two steps: (1) since $\chi_\mathbf{d}(\theta,\Delta\theta_{a})$ can be viewed as the summation of $M_t$ unit vectors with different directions in the two-dimensional space, we propose to construct a proper Cartesian coordinate system (CCS) for ease of comparing the magnitudes of $\chi_\mathbf{d}(\theta,\Delta\theta_{a})$ and $\chi_{\mathbf{d}^*}(\theta,\Delta\theta_{a})$, then we obtain the x-axis and y-axis components of $\chi_{\mathbf{d}}(\theta,\Delta\theta_a)$, denoted by $\chi_\mathbf{d}(\theta,\Delta\theta_{a})_x$ and $\chi_\mathbf{d}(\theta,\Delta\theta_{a})_y$, respectively, as well as $\chi_{\mathbf{d}^*}(\theta,\Delta\theta_a)_x$ and  $\chi_{\mathbf{d}^*}(\theta,\Delta\theta_a)_y$; (2) we prove that for any antenna distribution $\mathbf{d}$, the inequalities $\lvert\chi_{\mathbf{d}}(\theta,\Delta\theta_a)_x\lvert\geq\lvert\chi_{\mathbf{d}^*}(\theta,\Delta\theta_a)_x\lvert$ and $\lvert\chi_{\mathbf{d}}(\theta,\Delta\theta_a)_y\lvert\geq\lvert\chi_{\mathbf{d}^*}(\theta,\Delta\theta_a)_y\lvert$ hold, which further lead to $\lvert\chi_{\mathbf{d}}(\theta,\Delta\theta_a)\lvert\geq\lvert\chi_{\mathbf{d}^*}(\theta,\Delta\theta_a)\lvert$. 

In the first step, since the choice of the origin and y-axis direction in the CCS affects the analysis of  $\lvert\chi_{\mathbf{x}}(\theta,\Delta\theta_a)_w\lvert$, $\mathbf{x}\in\{\mathbf{d},\mathbf{d}^*\}$, $w\in\{x,y\}$. We propose to select an appropriate origin and positive y-axis direction such that 
$\lvert\chi_{\mathbf{d}^*}(\theta,\Delta\theta_a)_x\lvert=0$ is satisfied. This allows us to simplify the analysis and we only need to compare
$\lvert\chi_{\mathbf{d}}(\theta,\Delta\theta_a)_y\lvert$ and $\lvert\chi_{\mathbf{d}^*}(\theta,\Delta\theta_a)_y\lvert$. To determine such an origin and direction, we analyze the key factor that affects the value of
 $\lvert\chi_{\mathbf{d}^*}(\theta,\Delta\theta_a)\lvert$, i.e., $\psi_m(\mathbf{d}^*,\Delta\theta_{a})$. Since $\psi_m(\mathbf{d}^*,\Delta\theta_{a})$ is a special case of $\psi_m(\mathbf{d},\Delta\theta_{a})$, it suffices to analyze $\psi_m(\mathbf{d},\Delta\theta_{a})$. Then, due to the facts that $d_{t,i}\geq\frac{\lambda}{2},i\in\mathcal{M}_t$ and $\sum_{i=0}^{M_t-1}d_{t,i}\leq L$, it is straightforward to see that  $\psi_m(\mathbf{d},\Delta\theta_{a}), m=0,1,\cdots,M_t-1$ should satisfy  $\psi_m(\mathbf{d},\Delta\theta_{a})\in[\psi^{min}_{m}(\Delta\theta_{a}),\psi^{max}_{m}(\Delta\theta_{a})]$, where
 \begin{equation}\label{eq:1017_7}
 	\small
 	\psi^{min}_{m}(\Delta\theta_{a})=m\pi(\sin\theta-\sin(\theta+\Delta\theta_{a}))
 \end{equation} 
 and
 \begin{equation}\label{eq:1017_8}
 	\small
 	\begin{aligned}
 		&\psi^{max}_{m}(\Delta\theta_{a})\\&=2\pi(\sin\theta-\sin(\theta+\Delta\theta_{a}))(\frac{L}{\lambda}-\frac{M_t-1-m}{2}),
 	\end{aligned}
 \end{equation} 
 which represents the minimum and maximum achievable values of $\psi_m$, respectively. By substituting $\mathbf{d}^*$ into \eqref{eq:1017_8}, we have
\begin{equation}\label{eq:1020_2}
	\small
	\begin{aligned}
		&\psi_m(\mathbf{d}^*,\Delta\theta_{a})\\&=\left\{\begin{aligned}
			&\psi_m^{min}(\Delta\theta_{a}),\quad m=0,1,\cdots,({M_t-2})/{2}\\
			&\psi_m^{max}(\Delta\theta_{a}),\quad m={M_t}/{2},{M_t}/{2}+1,\cdots,M_t-1\\
		\end{aligned}
		\right
		..
	\end{aligned}
\end{equation}
Next, based on \eqref{eq:1017_7}, \eqref{eq:1017_8} and \eqref{eq:1020_2}, it follows that 
\begin{equation}\label{eq:1021_7}
	\small
	\begin{aligned}
		&\psi_m(\mathbf{d^*},\Delta\theta_{a})+\psi_{M_t-m}(\mathbf{d^*},\Delta\theta_{a})\\&=\psi_{M_t/2}(\mathbf{d^*},\Delta\theta_{a})+\psi_{M_t/2-1}(\mathbf{d^*},\Delta\theta_{a}),
	\end{aligned}
\end{equation}
which implies that under the antenna distribution $\mathbf{d}^*$, $a_m(\mathbf{d}^*,\Delta\theta_{a})$ and  $a_{M_t-m-1}(\mathbf{d}^*,\Delta\theta_{a}),m=0,1,\cdots,M_t-1$ are symmetric with respect to $y_1\triangleq a_{M_t/2}(\mathbf{d}^*,\Delta\theta_{a})+a_{M_t/2-1}(\mathbf{d}^*,\Delta\theta_{a})$. Therefore, we can construct a CCS using the common intersection of $a_m(\mathbf{d}^*,\Delta\theta_{a}),m=0,1,\cdots,M_t-1$ as the origin and the direction of $y_1$ as the positive y-axis, as shown in Fig. \ref{figure:fig.1020} (a). As  $a_m(\mathbf{d}^*,\Delta\theta_{a})$ and  $a_{M_t-m-1}(\mathbf{d}^*,\Delta\theta_{a}),m=0,1,\cdots,M_t-1$ are symmetric with respect to the y-axis, we have
\begin{equation}\label{eq:1108_1}
	\small
	a_m(\mathbf{d}^*,\Delta\theta_{a})_x+a_{M_t-m-1}(\mathbf{d}^*,\Delta\theta_{a})_x=0,
\end{equation}
where $a_m(\mathbf{d}^*,\Delta\theta_{a})_x$ and $a_{M_t-m-1}(\mathbf{d}^*,\Delta\theta_{a})_x$ represents the x-axis components of $a_m(\mathbf{d}^*,\Delta\theta_{a})$ and $a_{M_t-m-1}(\mathbf{d}^*,\Delta\theta_{a})$, respectively.
 This implies that $\lvert\chi_{\mathbf{d}^*}(\theta,\Delta\theta_a)_x\lvert=0$. Therefore, $\lvert\chi_{\mathbf{d}}(\theta,\Delta\theta_a)_x\lvert\geq\lvert\chi_{\mathbf{d}^*}(\theta,\Delta\theta_a)_x\lvert$ holds for any $\mathbf{d}$.

\begin{figure}[t]
	\centering
	\begin{subfigure}[b]{0.4\textwidth}
		\vspace{-1pt}
		\centering{\includegraphics[width=\textwidth]{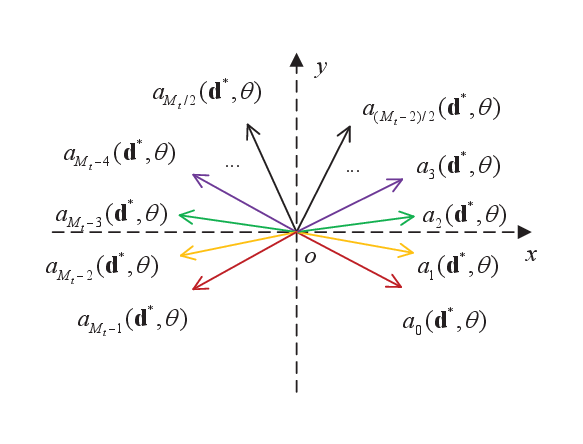}}
			\centering{\small(a)}
	\end{subfigure}
	\begin{subfigure}[b]{0.4\textwidth}
		\vspace{-1pt}
		\centering{\includegraphics[width=\textwidth]{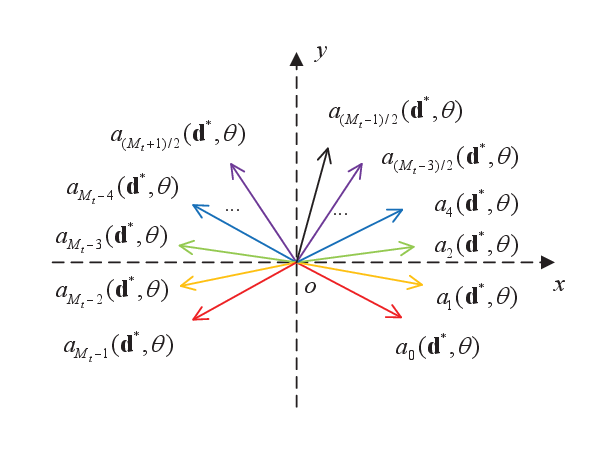}}
		\centering{\small(b)}\label{figure:fig.1020b}	
	\end{subfigure}
	\caption{Illustration of $a_m$ distribution for $m=0,1,\cdots,M_t-1$.}	
	\label{figure:fig.1020}	
\end{figure}

Next, we derive the explicit expressions of $\chi_\mathbf{d}(\theta,\Delta\theta_{a})_y$ and $\chi_{\mathbf{d}^*}(\theta,\Delta\theta_{a})_y$, and compare their values. Since  $d_{t,0}=0$, we can obtain from \eqref{eq:0506_2} that $\psi_0(\mathbf{d},\Delta\theta_a)=\psi_0(\mathbf{d}^*,\Delta\theta_a)$, which means that $a_0(\mathbf{d},\Delta\theta_a)=a_0(\mathbf{d}^*,\Delta\theta_a)$ holds for any $\mathbf{d}$. Let $\beta$ denote the angle between $a_0(\mathbf{d}^*,\Delta\theta_a)$ and the positive x-axis, then the angle between  $a_m(\mathbf{d},\Delta\theta_a),m=0,1,\cdots,M_t-1$ and the positive x-axis can be obtained as $\rho_m(\mathbf{d},\Delta\theta_{a})=\psi_m(\mathbf{d},\Delta\theta_{a})+\beta$. Thus, $\chi_\mathbf{d}(\theta,\Delta\theta_{a})_y$ and $\chi_{\mathbf{d}^*}(\theta,\Delta\theta_{a})_y$ can be expressed as
\begin{equation}\label{eq:1105_1}
	\small
	\chi_{\mathbf{d}}(\theta,\Delta\theta_a)_y=\sum_{m=0}^{M_t-1}\sin(\rho_m(\mathbf{d},\Delta\theta_a)),
\end{equation}
and
\begin{equation}\label{eq:1105_2}
	\small
	\chi_{\mathbf{d}^*}(\theta,\Delta\theta_a)_y=\sum_{m=0}^{M_t-1}\sin(\rho_m(\mathbf{d^*},\Delta\theta_a)),
\end{equation}
respectively.
Based on \eqref{eq:1105_1} and \eqref{eq:1105_2}, we can infer that for any $m\in\mathcal{M}_t$, if $\sin(\rho_m(\mathbf{d},\Delta\theta_a))\geq\sin(\rho_m(\mathbf{d}^*,\Delta\theta_a))$ is satisfied, then $\lvert\chi_{\mathbf{d}}(\theta,\Delta\theta_a)_y\lvert\geq\lvert\chi_{\mathbf{d}^*}(\theta,\Delta\theta_a)_y\lvert$ must hold. To prove this, we resort to the monotonicity of $\sin(\rho_m(\mathbf{d},\Delta\theta_a))$ on $\rho_m(\mathbf{d},\Delta\theta)\in[\psi_{m}^{min}(\Delta\theta_a)+\beta,\psi_{m}^{max}(\Delta\theta_a)+\beta]$. 
Besides, since $a_m(\mathbf{d}^*,\Delta\theta)$ and  $a_{M_t-m-1}(\mathbf{d}^*,\Delta\theta),m=0,1,\cdots,M_t-1$ are symmetric with respect to the y-axis, the analysis for the case of $m\in{\{1,2,\cdots,\frac{M_t}{2}-1\}}$ is similar to that of $m\in{\{\frac{M_t}{2},\frac{M_t}{2}+1,\cdots,{M_t}-1\}}$. Therefore, we will only focus on the former case in the following. 

 According to \eqref{eq:1020_2}, we can see that $a_m(\mathbf{d}^*,\Delta\theta_a), m\in{\{1,2,\cdots,\frac{M_t}{2}-1\}}$ are all located on the right hand side of the y-axis, which means that $\rho_m(\mathbf{d}^*,\Delta\theta_{a})=\psi_m^{min}(\Delta\theta_{a})+\beta<\frac{\pi}{2}$, for all $m\in\{1,2,\cdots,\frac{M_t}{2}-1\}$. If $\psi_m^{max}(\Delta\theta_{a})+\beta<\frac{\pi}{2}$, then $\sin(\rho_m(\mathbf{d},\Delta\theta_a))$ is a monotonically increasing function of $\rho_m(\mathbf{d},\Delta\theta_a)$. Therefore, $\sin(\rho_m(\mathbf{d},\Delta\theta_a))\geq\sin(\rho_m(\mathbf{d}^*,\Delta\theta_a))$ holds for any $\mathbf{d}$. Conversely, if $\psi_m^{max}(\Delta\theta_{a})+\beta\geq\frac{\pi}{2}$, then $\sin(\rho_m(\mathbf{d},\Delta\theta_{a}))$ first increases with $\rho_m(\mathbf{d},\Delta\theta_{a})$ in $\rho_m(\mathbf{d},\Delta\theta_{a})\in[\psi_m^{min}(\Delta\theta_{a})+\beta,\frac{\pi}{2}]$, and then decreases with $\rho_m(\mathbf{d},\Delta\theta_{a})$ in $\rho_m(\mathbf{d},\Delta\theta_{a})\in[\frac{\pi}{2},\psi_m^{max}(\Delta\theta_{a})+\beta]$. Therefore, to determine the minimum value of $\sin(\rho_m(\mathbf{d},\Delta\theta_a))$ in $\rho_m(\mathbf{d},\Delta\theta_a)\in[\psi_m^{min}(\Delta\theta_{a})+\beta,\psi_m^{max}(\Delta\theta_{a})+\beta]$, the values of $\sin(\psi_m^{min}(\Delta\theta_{a})+\beta)$ and $\sin(\psi_m^{max}(\Delta\theta_{a})+\beta)$ are required to be compared. 
However, directly perform the above comparison using algebraic methods  is very difficult, therefore, we propose to use $\sin(\psi_{M_t-m-1}^{max}(\Delta\theta_a)+\beta)$ as an intermediate variable to simplify the comparison process. Specifically, based on \eqref{eq:1021_7}, $\sin(\psi_m^{min}(\Delta\theta_a)+\beta)=\sin(\psi_{M_t-m-1}^{max}(\Delta\theta_a)+\beta)$ must be satisfied. Meanwhile, it can be observed that $\psi_{M_t-m-1}^{max}>\psi_{m}^{max}\geq\frac{\pi}{2}$ when $m\in{\{1,2,\cdots,\frac{M_t}{2}-1\}}$, which implies that $\sin(\psi_{m}^{max}+\beta)>\sin(\psi_{m}^{min}+\beta)$ holds. Thus, when $\rho_m(\mathbf{d},\Delta\theta_a)\in[\psi_m^{min}(\Delta\theta_{a})+\beta,\psi_m^{max}(\Delta\theta_{a})+\beta]$, the minimum value of $\sin(\rho_m(\mathbf{d},\Delta\theta_a))$ is $\sin(\psi_{m}^{min}(\Delta\theta_a)+\beta)$, and for any $\mathbf{d}$, $\sin(\rho_m(\mathbf{d},\Delta\theta_a)+\beta)\geq\sin(\rho_m(\mathbf{d}^*,\Delta\theta_a))$ holds for all $m\in\{0,1,\cdots,\frac{M_t}{2}-1\}$. Similarly, we have that for any $\mathbf{d}$, $\sin(\rho_m(\mathbf{d},\Delta\theta_a)+\beta)\geq\sin(\rho_m(\mathbf{d}^*,\Delta\theta_a))$ holds for all $m\in\{\frac{M_t}{2},\frac{M_t}{2}+1,\cdots,M_t-1\}$. Therefore, when $M_t$ is even, $\lvert\chi_\mathbf{d}(\theta,\Delta\theta_{a})\lvert\geq\lvert\chi_{\mathbf{d}^*}(\theta,\Delta\theta_{a})\lvert$ holds for any $\mathbf{d}$ that satisfies the condition $d_{t,i}\geq\frac{\lambda}{2},i\in\mathcal{M}_t$ and $\sum_{i=0}^{M_t-1}d_{t,i}\leq L$.

For the case when $M_t$ is odd, a similar approach can be applied to show that $\mathbf{d}^*$ achieves the minimum main lobe width, however, some modifications are required as will be explained in the following.  First, we can see that, in this case, \eqref{eq:1020_2} becomes
\begin{equation}\label{eq:1106_1}
	\small
	\begin{aligned}
		&\psi_m(\mathbf{d}^*,\Delta\theta_{a})\\&=\left\{\begin{aligned}
			&\psi_m^{min}(\Delta\theta_{a}),\quad m=0,1,\cdots,\frac{M_t-1}{2}\\
			&\psi_m^{max}(\Delta\theta_{a}),\quad m=\frac{M_t+1}{2},\frac{M_t+3}{2},\cdots,M_t-1\\
		\end{aligned}
		\right
		.,
	\end{aligned}
\end{equation}
from which we can further obtain 
\begin{equation}\label{eq:1106_2}
	\small
	\begin{aligned}
		&\psi_m(\mathbf{d^*},\Delta\theta_{a})+\psi_{M_t-m}(\mathbf{d^*},\Delta\theta_{a})\\&=\psi_{(M_t-3)/2}(\mathbf{d^*},\Delta\theta_{a})+\psi_{(M_t+1)/2}(\mathbf{d^*},\Delta\theta_{a}),
	\end{aligned}
\end{equation}
when $m\in\mathcal{M}_t,m\neq\frac{M_t-1}{2}$, and
\begin{equation}\label{eq:1106_3}
	\small
	\begin{aligned}
	&\psi_{(M_t-1)/2}^{min}(\Delta\theta_{a})+\psi_{(M_t-1)/2}^{max}(\Delta\theta_{a})\\&=\psi_{(M_t-3)/2}(\mathbf{d^*},\Delta\theta_{a})+\psi_{(M_t+1)/2}(\mathbf{d^*},\Delta\theta_{a}).
\end{aligned}
\end{equation}
The equality in \eqref{eq:1106_2} means that  $a_m(\mathbf{d}^*,\Delta\theta_{a})$ and  $a_{M_t-m-1}(\mathbf{d}^*,\Delta\theta_{a}),m\in\mathcal{A}$ are symmetric with respect to
\begin{equation}\label{eq:1107_2}
	\small
	y_2\triangleq a_{(M_t-3)/2}(\mathbf{d}^*,\Delta\theta_{a})+a_{(M_t+1)/2}(\mathbf{d},\Delta\theta_{a}),
\end{equation}
where $\mathcal{A}\triangleq\{m|m=0,1,\cdots,M_t-1,m\neq(M_t-1)/2\}$. In addition, according to \eqref{eq:1106_3}, $e^{j(\psi_{(M_t-1)/2}^{min}(\Delta\theta_{a})+\beta)}$ and $e^{j(\psi_{(M_t-1)/2}^{max}(\Delta\theta_{a})+\beta)}$ are also symmetric with respect to $y_2$. Therefore, we utilize the intersection point of $a_m(\mathbf{d}^*,\Delta\theta_{a}),m=0,1,\cdots,M_t-1$ as the origin and the direction of $y_2$ as the positive y-axis to construct a CCS in this case, as illustrated in Fig. \ref{figure:fig.1020} (b).

It is observed from Fig. \ref{figure:fig.1020} (b) that $a_m(\mathbf{d}^*,\Delta\theta_{a})$ and $a_{M_t-m-1}(\mathbf{d}^*,\Delta\theta_{a}),m\in\mathcal{A}$ are symmetric with respect to $y_2$, which is similar to the case when $M_t$ is even. Therefore, the following argument can be similarly proved based on the proof for the even $M_t$ case: when $\lvert\chi_{\mathbf{d}}(\theta,\Delta\theta)\lvert$ achieves its minimum value, 
\begin{equation}\label{eq:1107_1}
	\small
	\begin{aligned}
		&a_m(\mathbf{d},\Delta\theta_{a})=a_m(\mathbf{d}^*,\Delta\theta_{a}),\\&m=0,1,\cdots,M_t-1,m\neq\frac{M_t-1}{2}
	\end{aligned}
\end{equation}
will be satisfied. Then, we only need to prove that for any $\mathbf{d}$ that satisfies the condition \eqref{eq:1107_1}, $\lvert\chi_{\mathbf{d}}(\theta,\Delta\theta_a)\lvert\geq\lvert\chi_{\mathbf{d}^*}(\theta,\Delta\theta_a)\lvert$ will be satisfied.

To derive the expressions of $\lvert\chi_{\mathbf{d}}(\theta,\Delta\theta_a)\lvert$ and $\lvert\chi_{\mathbf{d}^*}(\theta,\Delta\theta_a)\lvert$, we first analyze $\chi_{\mathbf{d}}(\theta,\Delta\theta_a)_x$ and $\chi_{\mathbf{d}^*}(\theta,\Delta\theta_a)_x$, which can be respectively written as
\begin{equation}\label{eq:1106_4}
	\small
	\chi_{\mathbf{d}}(\theta,\Delta\theta_a)_x = \sum_{m=0}^{M_t-1} \cos(\rho_m(\mathbf{d},\Delta\theta_a)),
\end{equation}
and
\begin{equation}\label{eq:1106_5}
	\small
	\chi_{\mathbf{d}^*}(\theta,\Delta\theta_a)_x = \sum_{m=0}^{M_t-1} \cos(\rho_m(\mathbf{d}^*,\Delta\theta_a)).
\end{equation}
Since $a_m(\mathbf{d}^*,\Delta\theta_{a})$ and $a_{M_t - m - 1}(\mathbf{d}^*,\Delta\theta_{a})$ (for $m \neq \frac{M_t - 1}{2}$) are symmetric with respect to $y_2$, we have
\begin{equation}\label{eq:1106_6}
	\small
	\cos(\rho_m(\mathbf{d}^*,\Delta\theta_{a})) + \cos(\rho_{M_t - m - 1}(\mathbf{d}^*,\Delta\theta_{a})) = 0,
\end{equation}
for all $m \in \{0, 1, \dots, M_t - 1\}, m \neq \frac{M_t - 1}{2}$.
Thus, when $\mathbf{d}$ satisfies \eqref{eq:1107_1}, \eqref{eq:1106_4} and \eqref{eq:1106_5} can be simplified to $	\chi_{\mathbf{d}}(\theta,\Delta\theta_a)_x = \cos(\rho_{(M_t - 1)/2}(\mathbf{d},\Delta\theta_a))$ and $\chi_{\mathbf{d}^*}(\theta,\Delta\theta_a)_x = \cos(\rho_{(M_t - 1)/2}(\mathbf{d}^*,\Delta\theta_a))$.
Then based on the expressions of $\chi_{\mathbf{d}}(\theta,\Delta\theta_a)_y$ and $\chi_{\mathbf{d}^*}(\theta,\Delta\theta_a)_y$ given in \eqref{eq:1105_1} and \eqref{eq:1105_2},  $\lvert\chi_{\mathbf{d}}(\theta,\Delta\theta_a)\lvert$ and $\lvert\chi_{\mathbf{d}^*}(\theta,\Delta\theta_a)\lvert$ can be expressed as
\begin{equation}\label{eq:1106_9}
	\small
	\begin{aligned}
		&\lvert\chi_{\mathbf{d}}(\theta,\Delta\theta_a)\lvert\\&=1+2\bigg(\!\sum_{m\in\mathcal{A}}\!\!\!\sin(\rho_m(\mathbf{d},\Delta\theta_a)\bigg)\sin(\rho_{\frac{M_t-1}{2}}(\mathbf{d},\Delta\theta_a))\\&+\bigg(\!\sum_{m\in\mathcal{A}}\!\!\!\sin(\rho_m(\mathbf{d},\Delta\theta_a)\bigg)^2
	\end{aligned}
\end{equation}
and
\begin{equation}\label{eq:1106_10}
	\small
	\begin{aligned}
		&\lvert\chi_{\mathbf{d}^*}(\theta,\Delta\theta_a)\lvert\\&=1+2\bigg(\!\sum_{m\in\mathcal{A}}\!\!\!\sin(\rho_m(\mathbf{d}^*,\Delta\theta_a)\bigg)\sin(\rho_{\frac{M_t-1}{2}}(\mathbf{d}^*,\Delta\theta_a))\\&+\bigg(\sum_{m\in\mathcal{A}}\!\!\!\sin(\rho_m(\mathbf{d}^*,\Delta\theta_a)\bigg)^2.
	\end{aligned}
\end{equation}
When the antenna distribution $\mathbf{d}$ satisfies \eqref{eq:1107_1}, \eqref{eq:1106_9} can be rewritten as
\begin{equation}\label{eq:1106_11}
	\small
	\begin{aligned}
		&\lvert\chi_{\mathbf{d}}(\theta,\Delta\theta_a)\lvert\\&=1+2\bigg(\!\sum_{m\in\mathcal{A}}\!\!\sin(\rho_m(\mathbf{d}^*,\Delta\theta_a)\bigg)\sin(\rho_{\frac{M_t-1}{2}}(\mathbf{d},\Delta\theta_a))\\&+\bigg(\!\sum_{m\in\mathcal{A}}\!\!\sin(\rho_m(\mathbf{d}^*,\Delta\theta_a)\bigg)^2.
	\end{aligned}
\end{equation}
From \eqref{eq:1106_10} and \eqref{eq:1106_11}, it follows that to compare $\lvert\chi_{\mathbf{d}}(\theta,\Delta\theta_a)\lvert$ and $\lvert\chi_{\mathbf{d}^*}(\theta,\Delta\theta_a)\lvert$, we need to evaluate the values of $\sin(\rho_{(M_t-1)/2}(\mathbf{d},\Delta\theta_a))$ and $\sin(\rho_{(M_t-1)/2}(\mathbf{d}^*,\Delta\theta_a))$. Similar to the case when $M_t$ is even, we can obtain that for any antenna distribution $\mathbf{d}$, $\sin(\rho_{(M_t-1)/2}(\mathbf{d},\Delta\theta_a)) \geq \sin(\rho_{(M_t-1)/2}(\mathbf{d}^*,\Delta\theta_a))$ holds.
Thus, we have $\lvert\chi_{\mathbf{d}}(\theta,\Delta\theta_a)\lvert \geq \lvert\chi_{\mathbf{d}^*}(\theta,\Delta\theta_a)\lvert$, which implies that $\mathbf{d}^*$ minimizes the main lobe width when $M_t$ is odd.

Based on the above analysis, we conclude that the proposed antenna distribution $\mathbf{d}^*$ is optimal for minimizing the main lobe width of the ambiguity function. The first null point of the ambiguity function $\Delta\theta_{a1}$ can then be obtained by setting $\frac{\partial\lvert\chi_{\mathbf{d}^*}(0,0,\theta,\theta+\Delta\theta_{a1})\lvert^2}{\partial\Delta\theta_{a1}}=0$, and the detailed proof is omitted here for brevity. This completes the proof.
\end{proof}
\vspace{-0.1cm}
\begin{figure}[t]
	\centering
	\includegraphics[width=0.45\textwidth]{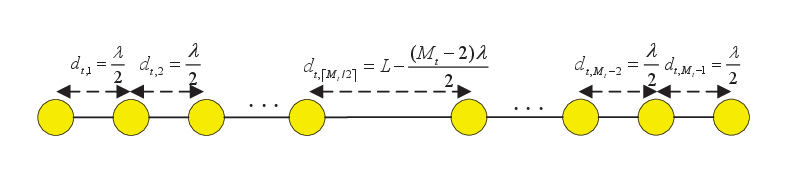}
	\caption{Illustration of the optimal MAs’ positions for minimum main lobe width.}	
	\label{figure:fig.01017}	
	\vspace{-0.4cm}
\end{figure}
Theorem \ref{thm1} demonstrates that to minimize the main lobe width, the MAs should be partitioned into two groups as shown in Fig. \ref{figure:fig.01017}, which is defined as minimum main lobe width distribution (MMLWD).
Also, several key observations can be obtained. First, we can see that $B_{min}$ decreases as $L/\lambda$ increases and as $M_t$ and $\lvert\theta\lvert$ decrease. With the increasing of $L/\lambda$, the performance gain growth will gradually decrease, which means that employing an antenna array with excessive large dimension may not be very cost-efficient. Then, based on the relationship between $B_{min}$ and $M_t$, it can be observed that a smaller antenna number could actually lead to narrower main lobe width in practical applications. But the side lobe performance in this case cannot be guaranteed. Therefore, in practice, the value of $M_t$ should be properly chosen to achieve a good balance between main lobe and side lobe performance. Finally, from \eqref{eq:21}, we can infer that the best angular resolution of the considered system is achieved at $\theta=0$. 
 
\subsection{Doppler and delay Domains} \label{section3-B}
 Different from the angular domain, the relationship between the ambiguity function and the antenna distribution in the Doppler and delay domains are more complex since it is very difficult to simplify the impact of FH code on the ambiguity function, as can be observed in \eqref{eq:15}. This implies that it is difficult to derive the specific antenna distribution that achieves the minimum main lobe width in these two domains. However, by analyzing \eqref{eq:15}, we discover that lower bounds exist for the ambiguity function, which correspond to the performance of the considered MA-enabled MIMO radar system under ideal conditions, and they can be used as a theoretical underpinning for enhanced radar performance through antenna distribution optimization. The main results are given in the following theorems.
 \newtheorem{theorem1}{\bf Theorem}
 \begin{theorem}\label{thm2}
 For a given $v$, assuming that $L$ is sufficiently  large, and the target’s range and direction angle are accurately estimated, i.e., $\tau=0$ and $\theta=\theta'$, a lower bound of the ambiguity function in the Doppler domain, denoted by $\chi^*(v)$ $($i.e., $\chi^*(v)\leq\lvert\chi(0,v,\theta,\theta)\lvert)$ is given by
 \begin{equation}\label{eq:0611_1}
 	\small
 	\chi^*(v)=\left\{\begin{aligned}
 	&\lvert M_t{\rm sinc}(v\Delta_t)\lvert-\Xi^*(v),\quad \lvert M_t{\rm sinc}(v\Delta_t)\lvert\geq \Xi^*(v)\\
 	&0,\quad  \mathrm{otherwise}
 	\end{aligned}
 	\right.
 	,
 \end{equation}
where
\begin{equation}\label{eq:0611_2}
	\small
	\Xi^*(v)=\sum_{\substack{m,m'=0\\m\neq m'}}^{M_t-1}\sum_{q=0}^{Q-1}\lvert{\rm sinc}(v\Delta t-c_{m,q}+c_{m',q})\lvert.
\end{equation}
\end{theorem}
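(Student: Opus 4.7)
The plan is to establish the bound via a reverse triangle inequality applied to a diagonal/off-diagonal decomposition of $\chi(0,v,\theta,\theta)$. First, I would specialize \eqref{eq:15} to $\tau=0$ and $\theta'=\theta$. Since $\chi^{r}(\cdot,\cdot)$ is supported only on $|\tau|<\Delta_t$, the factor $\chi^{r}(-(q'-q)\Delta_t,\,\cdot)$ vanishes whenever $|q-q'|\geq 1$; because $q,q'$ are integers, the inner double sum over sub-pulse indices collapses to a single sum over $q=q'$. The FH-code cross phase $e^{j2\pi\Delta_f(c_{m,q}-c_{m',q})q\Delta_t}$ and the transmit steering phase $e^{j2\pi(\sum_{i=0}^{m}d_{t,i}-\sum_{i=0}^{m'}d_{t,i})\sin\theta/\lambda}$ both reduce to unity on the diagonal $m=m'$, so that the diagonal contribution becomes $\mathbf{d}$-independent—a crucial structural feature.

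Next, I would split the resulting expression as $\chi(0,v,\theta,\theta)=D(v)+R(v,\mathbf{d})$, where $D$ collects the $m=m'$ terms and $R$ the $m\neq m'$ terms. On the diagonal, $c_{m,q}-c_{m,q}=0$, so each summand equals $\chi^{r}(0,v)=e^{j\pi v\Delta_t}\,\mathrm{sinc}(v\Delta_t)$ by \eqref{eq:16}. After accounting for the sub-pulse sum under the paper's normalization, the diagonal collects $M_t$ identical contributions, giving $|D(v)|=|M_t\,\mathrm{sinc}(v\Delta_t)|$. Since $D$ is independent of the antenna distribution, it can be isolated as the dominant signal term for every admissible $\mathbf{d}$.

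For the off-diagonal piece, I would apply the triangle inequality term by term, so that $|R(v,\mathbf{d})|\leq\sum_{m\neq m'}\sum_{q}|\chi^{r}(0,v-(c_{m',q}-c_{m,q})\Delta_f)|$. Under the standard FH-MIMO condition $\Delta_f\Delta_t=1$, each amplitude equals $|\mathrm{sinc}(v\Delta_t-c_{m',q}+c_{m,q})|$, and summing reproduces precisely $\Xi^{*}(v)$ in \eqref{eq:0611_2}. Combining the two estimates with the reverse triangle inequality, $|D+R|\geq|D|-|R|$, yields $|\chi(0,v,\theta,\theta)|\geq|M_t\,\mathrm{sinc}(v\Delta_t)|-\Xi^{*}(v)$, which is meaningful only when this quantity is nonnegative; otherwise the trivial bound $|\chi|\geq 0$ applies, producing the two-case form of $\chi^{*}(v)$.

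The main obstacle, as I see it, is not the algebra but clarifying the role of the hypothesis ``$L$ sufficiently large''. The inequality $|\chi|\geq\chi^{*}$ follows from the reverse triangle inequality and therefore holds for every feasible $\mathbf{d}$ regardless of $L$; the large-$L$ assumption is needed so that the feasible set (subject to $d_{t,i}\geq\lambda/2$ and $\sum_i d_{t,i}\leq L$) is rich enough for the off-diagonal phases to align destructively against $D$, making the bound approachable rather than vacuous. I would therefore prove the inequality first in full generality, and then append a short remark explaining when equality is essentially attained so the hypothesis does not appear to be used inside the derivation of the estimate itself.
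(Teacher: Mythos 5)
Your proposal is correct and follows essentially the same route as the paper's proof in Appendix A: collapse the sub-pulse double sum via the support of $\chi^{r}$, split into the $\mathbf{d}$-independent diagonal term $M_t\,\mathrm{sinc}(v\Delta_t)$ and the off-diagonal remainder, bound the latter term-by-term by $\Xi^*(v)$, and finish with the reverse triangle inequality. Your closing observation—that the estimate itself holds for all feasible $\mathbf{d}$ and the large-$L$ hypothesis only governs attainability—is a correct clarification of a point the paper leaves implicit.
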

 \begin{proof}
 	Please refer to Appendix \ref{A.A_1}.
\vspace{-0.1cm}
 \end{proof}
\newtheorem{theorem2}{theorem}
\begin{theorem}\label{thm3}
 	For a given $\tau$, assuming that $L$ is sufficiently large, and  the target’s velocity and direction angle are accurately estimated, i.e., $v=0$ and $\theta=\theta'$, the lower bound of the ambiguity function in the delay domain $\chi^d(\tau)$ $($i.e., $\chi^d(\tau)\leq\lvert\chi(\tau,0,\theta,\theta)\lvert)$ is given by
 	\begin{equation}\label{eq:0613_1}
 		\small
 			\begin{aligned}
 					\chi^d(\tau)=\left\{\begin{aligned}
 						&\lvert\Upsilon^d(\tau)\lvert-\Xi^d(\tau),\quad  \lvert\Upsilon^d(\tau)\lvert\geq\Xi^d(\tau)\\
 						&0,\quad\mathrm{otherwise}
 						\end{aligned}
 					\right.
 				,
 				\end{aligned}
 		\end{equation}
 where
 \begin{equation}\label{eq:0613_2}
 	\small
 	\begin{aligned}
 			\Upsilon^d(\tau)\triangleq\sum_{m\!=0}^{M_t\!-\!1}\sum_{q,q'\!=0}^{Q-1}&\chi^{r}(\tau\!-\!(q'\!-\!q){\Delta}_t,(c_{m,q}-c_{m,q'})\Delta_f)\\&e^{-j2{\pi\Delta}_fc_{m,q'}\tau}
 		\end{aligned}
 \end{equation}
 and
 \begin{equation}\label{eq:0613_3}
 	\small
 	\begin{aligned}
 			&\Xi^d(\tau)\triangleq\!\!\!\!\!\sum_{\substack{m,m'=0\\m\neq m'}}^{M_t\!-\!1}\!\sum_{q,q'\!=0}^{Q-1}\!\lvert\chi^{r}(\tau\!-\!(q'\!-\!q){\Delta}_t,\!(c_{m \!,q}-c_{m'\!\!,q'})\Delta_f)\lvert.
 		\end{aligned}
 \end{equation}
 \end{theorem}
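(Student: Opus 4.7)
The plan is to follow the template of the Doppler-domain proof of Theorem~\ref{thm2} and reduce the delay-domain lower bound to a reverse triangle inequality applied to the transmit-antenna off-diagonal terms in \eqref{eq:15}.

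First I would substitute $v=0$ and $\theta=\theta'$ into \eqref{eq:15}, so that the transmit-antenna phase factor collapses to $e^{j2\pi\sin\theta\,(\sum_{i=0}^{m}d_{t,i}-\sum_{i=0}^{m'}d_{t,i})/\lambda}$, which equals $1$ exactly on the diagonal $m=m'$. Splitting the double sum over $(m,m')$ into its diagonal and off-diagonal contributions then yields $\chi(\tau,0,\theta,\theta)=\Upsilon^d(\tau)+E(\tau,\theta)$, where the diagonal part is antenna-independent and, after regrouping the remaining antenna-independent phase factors, is identified term-by-term with the expression in \eqref{eq:0613_2}, while the off-diagonal residual $E(\tau,\theta)$ carries all dependence on $\mathbf{d}$.

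Next, I would bound $|E(\tau,\theta)|$ by the ordinary triangle inequality. Since every exponential appearing in $E(\tau,\theta)$ has unit modulus, the magnitude of each summand reduces to $|\chi^{r}(\tau-(q'-q)\Delta_t,(c_{m,q}-c_{m',q'})\Delta_f)|$, and summing over $m\neq m'$ and over $q,q'$ reproduces exactly $\Xi^d(\tau)$ from \eqref{eq:0613_3}. The reverse triangle inequality then gives
\begin{equation*}
|\chi(\tau,0,\theta,\theta)|\;\geq\;|\Upsilon^d(\tau)|-|E(\tau,\theta)|\;\geq\;|\Upsilon^d(\tau)|-\Xi^d(\tau).
\end{equation*}
The piecewise definition in \eqref{eq:0613_1} is then covered automatically: the nontrivial case is precisely this derived inequality, whereas in the other regime the trivial bound $|\chi|\geq 0$ subsumes the stated $\chi^d(\tau)=0$ branch.

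The step I expect to be the main obstacle is justifying the role of the "sufficiently large $L$" hypothesis, since the reverse-triangle step above does not actually invoke it. Its purpose is to guarantee that the lower bound is \emph{attainable} (and hence meaningful as a benchmark): only when the feasible set $\{\mathbf{d}:d_{t,i}\geq\lambda/2,\sum_i d_{t,i}\leq L\}$ is rich enough can the $M_t(M_t-1)$ antenna-dependent phases in $E(\tau,\theta)$ be adversarially aligned with the individual magnitudes $|\chi^{r}(\cdot,\cdot)|$ so that the triangle inequality is saturated. I would handle this via a Kronecker-type density / equidistribution argument on the admissible phase vectors as $L$ grows, mirroring the analogous step in the proof of Theorem~\ref{thm2} in Appendix~\ref{A.A_1}.
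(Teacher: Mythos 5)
Your proof follows essentially the same route as the paper's own proof in Appendix B: the same diagonal/off-diagonal split of the $(m,m')$ sum (your $E(\tau,\theta)$ is the paper's $\Lambda^d(\tau)$), followed by the reverse triangle inequality $|\chi|\geq|\Upsilon^d|-|E|$ and a termwise triangle-inequality bound $|E|\leq\Xi^d(\tau)$ exploiting the unit modulus of the antenna-dependent phase factors. Your observation that the ``sufficiently large $L$'' hypothesis is not needed for the validity of the bound but only for its attainability is correct and in fact slightly more careful than the paper, which invokes that assumption without isolating its role.
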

 \begin{proof}
 Please refer to Appendix \ref{A.A_2}.
\end{proof}
 
Theorems \ref{thm2} and \ref{thm3} are very useful, as they establish lower bounds of the ambiguity function in the Doppler and delay domains, respectively.  In Theorem \ref{thm2}, \eqref{eq:0611_1} implies that  an increase in the signal bandwidth will result  in a larger  main lobe width of the ambiguity function in the Doppler domain. Besides, in Theorem 3, based on the definition of $\chi^r(\tau,v)$ given in (9), it can be inferred that as  $\Delta_f$ increases, the main lobe width of $\chi^r(\tau-(q'-q)\Delta_f,(c_{m,q}-c_{m',q'})\Delta_f)$ with respect to $\tau$ decreases. Consequently, a larger $\Delta_f$ results in a reduction in the main lobe widths of $\lvert\Upsilon^d(\tau)\lvert$ and $\Xi^d(\tau)$, thereby leading to a further contraction of the main lobe width of $\chi^d(\tau)$. This implies that increasing the signal bandwidth reduces the main lobe width of the ambiguity function in the delay domain. Furthermore, Theorems \ref{thm2} and \ref{thm3}
 reveal that, under the assumption of a sufficiently large $L$, both $\chi^*(v)$ and $\chi^d(\tau)$ become independent of the antenna size $L$, which is mainly due to the periodic nature of the antenna distribution’s influence on the ambiguity function, as captured by $e^{j2\pi J_{m,m',\theta}(\mathbf{d})}$. This result indicates that performance of the ambiguity function  in these two domains will not improve as $L$ increases if $L$ is very large, which is different from the results observed in the angular domain. Furthermore, we observe that the performance gain of the MAs in both domains is minimal when $\theta = 0$. This is because, when $\theta = 0$, we can observe that $e^{j2\pi J_{m,m',\theta}(\mathbf{d})} = 1$, which implies that $e^{j2\pi J_{m,m',\theta}(\mathbf{d})}$ becomes a constant that is independent of the antenna distribution $\mathbf{d}$. 
Finally, the derived lower bounds can serve as performance benchmarks for the optimization of antenna distribution, as discussed in the next section.

\section{Problem formulation and proposed Algorithm}
 In this section, we formulate an optimization problem of the antenna distribution $\mathbf{d}$ to achieve a more balanced performance between the main lobe and side lobes. Then, a low-complexity GPRM-based algorithm is proposed to address this problem.
\subsection{Problem Formulation}
First,  we illustrate in Fig. \ref{figure:fig.0428} a numerical example of the MMLWD given in \eqref{eq:0406_2} based on the simulation setup in Section V, and we set $L=9\lambda$, $\theta=0$ and $M_t=8$. As can be observed, the MMLWD is able to effectively reduce  the main lobe width as compared to an equally spaced antenna array with half wavelength spacing; however, its side lobe peak is very high and close to the main lobe peak, which will deteriorate the angular resolution and anti-interference capabilities of the proposed system. Motivated by this, in this subsection, we formulate an optimization problem to strike a balance between main lobe and side lobe performance in the angular/Doppler/delay domains.  

\begin{figure}[t]
	\centering
	\includegraphics[width=0.4\textwidth]{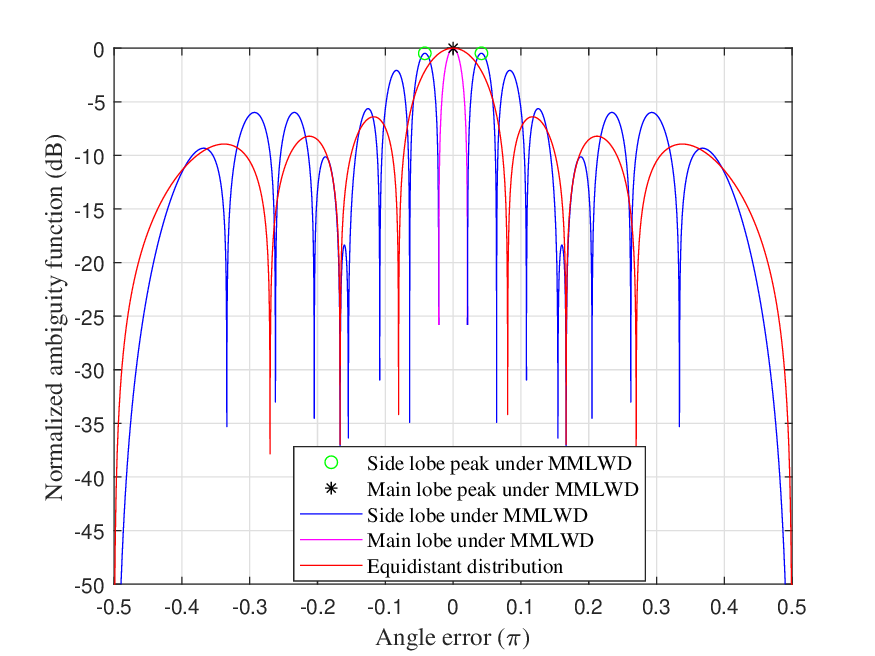}
	\caption{Ambiguity function in the angular domain.}	
	\label{figure:fig.0428}	
	\vspace{-0.4cm}
\end{figure}

In the angular domain, the main lobe peak of the ambiguity function is achieved when the parameters of the matched filter and the target are well matched, where we have $\chi(0,0,\theta,\theta)=M_t$ by using \eqref{eq:20} and assuming $\tau=0$, $v=0$. Therefore, we need to eliminate the peaks in $\lvert\chi(0,0,\theta,\theta')\lvert$, which are not in the line $\{0,0,\theta,\theta\}$. This can be done by imposing a cost function which puts penalties on these peak values\cite{chenMIMORadarAmbiguity2008} and forces the energy of the ambiguity function $\lvert\chi(0,0,\theta,\theta')\lvert$ to be evenly spread. Similar to the angular domain, it is necessary to mitigate the peaks in $\lvert\chi(\tau,0,\theta,\theta)\lvert$ and $\lvert\chi(0,v,\theta,\theta)\lvert$, which correspond to the Doppler and delay domains, respectively. Generally, the following optimization  problem can be formulated:
\begin{align}
	\small
		\min_{\mathbf{d}}  &\quad f_j(\mathbf{d}) \quad \label{pro:1}\\
		\text{s.t.}\ &\quad\sum_{i=1}^{M_t-1}d_{t,i}\leq L, \tag{\ref{pro:1}{a}}\\
		&\quad d_{t,i}\geq \frac{\lambda}{2},i=1,2,\cdots,M_t-1,\tag{\ref{pro:1}{b}}
\end{align}
where $j=1,2,3$ and
\begin{equation}\label{eq:0716_1}
	\small
	f_1(\mathbf{d})=\int_{-\frac{\pi}{2}}^{+\frac{\pi}{2}}\int_{-\frac{\pi}{2}}^{+\frac{\pi}{2}}\lvert\chi(0,0,\theta,\theta')\lvert^2 d\theta' d\theta,
\end{equation} 
 \begin{equation}\label{eq:0716_3}
 	\small
 	f_2(\mathbf{d})=\int_{-\frac{\pi}{2}}^{+\frac{\pi}{2}}\int_{-f_{max}}^{+f_{max}}\lvert\chi(0,v,\theta,\theta)\lvert^2 dvd\theta,
 \end{equation}
 \begin{equation}\label{eq:0716_2}
 	\small
 	f_3(\mathbf{d})=\int_{-\frac{\pi}{2}}^{+\frac{\pi}{2}}\int_{-Q\Delta t}^{+Q\Delta t}\lvert\chi(\tau,0,\theta,\theta)\lvert^2 d\tau d\theta,
 \end{equation}
represent the objective functions in the angular, Doppler and delay domains, respectively. In practice, however, these objectives may be required to be optimized simultaneously. Thus, we propose to use the weighted sum of $f_j(\mathbf{d}),j=1,2,3$ as the objective function and problem \eqref{pro:1} can be transformed into 
	\begin{align}
		\small
		\min_{\mathbf{d}}  &\quad f_a(\mathbf{d})\triangleq\sum_{j=1}^{3}\alpha_jf_j(\mathbf{d})\label{pro:2}\\
		\text{s.t.}\ &\quad\sum_{i=1}^{M_t-1}d_{t,i}\leq L,\tag{\ref{pro:2}{a}}\\
		&\quad d_{t,i}\geq \frac{\lambda}{2},i=1,2,\cdots,M_t-1\tag{\ref{pro:2}{b}}
	\end{align}
where $\alpha_j$ is the weight coefficient corresponding to the $j$-th objective function and  $\alpha$'s are enforced to satisfy  $\sum_{j=1}^{3}\alpha_j=1$.
\subsection{Proposed Algorithm}
Problem \eqref{pro:2} is highly non-convex and challenging to solve due to the integral operations in its objective function. Moreover, $\lvert\chi(0,0,\theta,\theta')\lvert^2$, $\lvert\chi(\tau,0,\theta,\theta')\lvert^2$, and $\lvert\chi(0,v,\theta,\theta')\lvert^2$ are generally exponential functions with respect to the antenna positions in $\mathbf{d}$, which are also very difficult to handle. In existing works, meta-heuristic algorithms, such as the simulated annealing and genetic algorithm (GA), are usually employed to solve such kind of problems , as in \cite{chenMIMORadarAmbiguity2008} and \cite{eedaraDualFunctionFrequencyHoppingMIMO2022}. While these methods yield satisfactory performance, their high computational complexity renders them unsuitable for real-time applications in the considered system, where dynamic adjustments of the antenna positions are crucial for attaining different resolutions in the angle, Doppler, and delay domains. Thus, we propose a low-complexity algorithm based on the RGPM to effectively solve problem \eqref{pro:1}, and we choose GA as the baseline.

To proceed, we first transform the complex integral-based objective function into a more manageable form based on summation, facilitating the subsequent gradient calculation process.  Specifically, we apply the Riemann sum method to approximate $f_j(\mathbf{d})$'s, and the corresponding approximated functions can be expressed as follows:
\begin{equation}\label{eq:49}
	\bar{f}_1(\mathbf{d})=\sum_{i_1=0}^{n_1}\sum_{i_2=0}^{n_1}\lvert\chi(0,0,\theta_{i_1},\theta'_{i_2})\lvert^2 \Delta\theta' \Delta\theta,
\end{equation}
\begin{equation}\label{0919_2}
	\bar{f}_2(\mathbf{d})=\sum_{i_1=0}^{n_1}\sum_{i_3=0}^{n_2}\lvert\chi(0,v_{i_3},\theta_{i_1},\theta_{i_1})\lvert^2 \Delta v\Delta\theta,
\end{equation}
\begin{equation}\label{0919_1}
	\bar{f}_3(\mathbf{d})=\sum_{i_1=0}^{n_1}\sum_{i_4=0}^{n_3}\lvert\chi(\tau_{i_4},0,\theta_{i_1},\theta_{i_1})\lvert^2 \Delta \tau\Delta\theta,
\end{equation}
where $\lvert\chi(\tau,v,\theta,\theta')\lvert^2$ is partitioned into $n_1$, $n_2$, and $n_3$ equal-length segments along the angle, Doppler, and delay dimensions, respectively. For the angular domain, we define  $\Delta\theta\triangleq\frac{\pi}{n_1}$, $\Delta\theta'\triangleq\frac{\pi}{n_1}$, $\theta_{i_1}\triangleq-\frac{\pi}{2}+ i_1 \Delta\theta$ and $\theta'_{i_2}\triangleq-\frac{\pi}{2}+ i_2 \Delta\theta'$. Similarly, we define $\Delta v \triangleq \frac{2f_{max}}{n_2}$, $v_{i_3}\triangleq-f_{max}+i_3\Delta v$, $\Delta \tau \triangleq \frac{2Q\Delta t}{n_3}$, and $\tau_{i_4}\triangleq-Q\Delta \tau+i_4 \Delta \tau$ for the Doppler and delay domains. According to the Nyquist Sampling Theorem \cite{c2009radar}, to reduce the impact of the above approximation on the ambiguity function, the sampling intervals must satisfy $\Delta\theta\leq\arcsin(\frac{\lambda}{L})$, $\Delta v\leq\frac{1}{T_w}$, and $\Delta\tau \leq\frac{1}{2K\Delta_f}$. Consequently, $n_1$, $n_2$ and $n_3$ should satisfy $n_1\geq\frac{2\pi}{B_{min}}$, $n_2\geq4f_{max}T_w$, and $n_3\geq4Q\Delta tK\Delta_f$.
 In addition, to reduce the complexity of calculating $\bar{f}(\mathbf{d}) \triangleq\sum_{j=1}^{3}\alpha_j\bar{f}_j(\mathbf{d})$, due to the dependence of $\lvert\chi(\tau,v,\theta,\theta')\lvert^2$ on $\mathbf{d}$, we reformulate $\lvert\chi(\tau,v,\theta,\theta')\lvert^2$ using Euler's formula, as detailed in Appendix \ref{A.A}.

Next, based on the above transformations, we address problem \eqref{pro:1} by iteratively minimizing $\bar{f}(\mathbf{d})$ along its negative gradient direction, i.e., $-\nabla\bar{f}(\mathbf{d})=-[\frac{\partial\bar{f}(\mathbf{d})}{\partial d_{t,0}},\frac{\partial\bar{f}(\mathbf{d})}{\partial d_{t,1}},\cdots,\frac{\partial\bar{f}(\mathbf{d})}{\partial d_{t,M_t-1}}]^T$, where
\begin{equation}\label{eq:0520_2}
	\frac{\partial\bar{f}(\mathbf{d})}{\partial d_{t,x}}=\alpha_1\frac{\partial\bar{f}_1(\mathbf{d})}{\partial d_{t,x}}+\alpha_2\frac{\partial\bar{f}_2(\mathbf{d})}{\partial d_{t,x}}+\alpha_3\frac{\partial\bar{f}_3(\mathbf{d})}{\partial d_{t,x}}.
\end{equation}
To obtain $\frac{\partial\bar{f}(\mathbf{d})}{\partial d_{t,x}}$, we should compute the derivatives $\frac{\partial\lvert\chi(0,0,\theta_{i_1},\theta'_{i_2})\lvert^2}{\partial d_{t,x}}$, $\frac{\partial\lvert\chi(\tau_{i_3},0,\theta_{i_1},\theta_{i_1})\lvert^2 }{\partial d_{t,x}}$, and $\frac{\partial\lvert\chi(0,v_{i_4},\theta_{i_1},\theta_{i_1})\lvert^2 }{\partial d_{t,x}}$, and the detailed derivations are provided in Appendix \ref{A.B}.

After  each gradient descent iteration, $\mathbf{d}^k$ ($k$ denotes the iteration index) must remain within this region. As the feasible region of problem \eqref{pro:2} is determined by (\ref{pro:2}a) and (\ref{pro:2}b), we introduce a projection matrix $\mathbf{P}^k$ based on RGPM, which is given by 
\begin{equation}\label{eq:42}
	\mathbf{P}^k = \mathbf{I}-(\mathbf{M}^k)^T(\mathbf{M}^k(\mathbf{M}^k)^T)^{-1}\mathbf{M}^k,
\end{equation}
where $\mathbf{M}^k$ represents the active boundary constraints of $\mathbf{d}^k$. Specifically, if $\mathbf{d}^k$ lies strictly within the feasible region, then $\mathbf{M}^k$ is an empty matrix, which indicates that all inequality constraints are strictly satisfied, with none of them holding as an equality. To obtain $\mathbf{M}^k$ when $\mathbf{d}^k$ is on the boundary, we first equivalently rewrite  the constraints of \eqref{pro:1} as $\mathbf{A}\mathbf{d}^k \succeq \mathbf{b}$, where $\mathbf{A} = \begin{bmatrix} \mathbf{I}; -\mathbf{1}_{1\times (M_t-1)} \end{bmatrix}$ and $\mathbf{b} = \begin{bmatrix} \frac{\lambda}{2} \mathbf{1}_{(M_t-1)\times 1}; -L \end{bmatrix}$. If $\mathbf{d}^k$ lies on the boundary of the feasible region, we can decompose $\mathbf{A}$ and $\mathbf{b}$ into two parts, known as the  active and inactive parts, i.e., $\mathbf{A} = \begin{bmatrix} \mathbf{A}_1; \mathbf{A}_2 \end{bmatrix}$ and $\mathbf{b} = \begin{bmatrix} \mathbf{b}_1; \mathbf{b}_2 \end{bmatrix}$, where $\mathbf{A}_1\mathbf{d}^k = \mathbf{b}_1$ defines the active constraints (i.e., those satisfied with equality), and $\mathbf{A}_2\mathbf{d}^k \succ \mathbf{b}_2$ defines the inactive constraints. Consequently, we set $\mathbf{M}^k = \mathbf{A}_1$ to represent the active boundary constraints in the current iteration.

Once $\mathbf{M}^k$ is determined, the next step is to compute the step size $\omega^k$, which plays a crucial role in accelerating the proposed algorithm. To ensure efficient convergence, we determine the step size $\omega^k$ in each iteration by solving the following optimization problem:
\begin{align}
	\small
	\min_{\omega^k}  &\quad \bar{f}_a(\mathbf{d}^k-\omega^k\mathbf{P}^k\nabla \bar{f_a}(\mathbf{d}^k)) \label{pro:3}\\
	\text{s.t.}\ &\quad\mathbf{A}_2(\mathbf{d}^k-\omega^k\mathbf{P}^k\nabla \bar{f_a}(\mathbf{d}^k))\succeq 0,\tag{\ref{pro:3}{a}} \label{Problema}\\
	&\quad\omega^k\geq 0\tag{\ref{pro:3}{b}} \label{Problemb}.
\end{align}
To solve this problem, we employ the Armijo rule \cite{Armijo1966}, which is effective in adaptively selecting an appropriate step size $\omega^k$ that satisfies the convergence criteria. 
After obtaining an appropriate value of  $\omega^k$, we update $\mathbf{d}$ in the direction of the negative projected gradient, i.e.,
\begin{equation}\label{eq:37}
	\small
	\mathbf{d}^{k+1}=\mathbf{d}^{k}-\omega^k\mathbf{P}^k\nabla \bar{f_a}(\mathbf{d}^k).
\end{equation}
Note that the proposed algorithm  converges under the following conditions: (1) when $\mathbf{d}^k$ is within the feasible region, then $\|\mathbf{P}\nabla\bar{f_a}(\mathbf{d}^k)\| \!= 0$ should be satisfied; (2) when $\mathbf{d}^k$ is on the boundary, then $\|\mathbf{P}\nabla\bar{f_a}(\mathbf{d}^k)\|\!=\!0$ and  $(\mathbf{M}\mathbf{M}^T)^{-\!1}\mathbf{M}\nabla\bar{f_a}(\mathbf{d}^k)\!\geq \mathbf{0}$ should be satisfied.

\renewcommand\baselinestretch{\linspreadalgr}\selectfont 
\begin{algorithm}[t]
	\small
	\SetAlgoLined 
	\caption{\small Proposed RGPM-based Algorithm}
	\textbf{Input:} Maximum iteration number $K$, threshold $T$, and step size $\omega$. \textbf{Initialize}: $\mathbf{d}^0$, $k=0$\;
	\While{$k<K$}{
		$k \gets k+1$\;
		Obtain $\mathbf{M}^k$ based on $\mathbf{A}\mathbf{d}^k\succeq \mathbf{b}$\;
		Calculate $\mathbf{P}^k$ according to \eqref{eq:42}\;
		\eIf{$\|\mathbf{P}^k\nabla\bar{f_a}(\mathbf{d})\|<T$}{\eIf{$\mathbf{M}^k$ is an empty matrix}{\textbf{Break;}}{ $\mathbf{u}^k=(\mathbf{M}^k(\mathbf{M}^k)^T)^{-1}\mathbf{M}^k\nabla\bar{f_a}(\mathbf{d}^{k})$, obtain the minimum element in $\mathbf{u}^k$ via $u_j^k = \min (\mathbf{u}^k)$\;
				\eIf{$u^k_j\geq 0$}{\textbf{Break;}}{Update $\mathbf{M}^k$ by removing the $j$-th row in $\mathbf{M}^k$ and go to \textbf{step 5}\;}}}{Obtain $\omega^k$ by solving \eqref{pro:3}\;
			Calculate $\mathbf{d}^{k+1}$ from \eqref{eq:37};
			
	}}
	\textbf{Output:} $\mathbf{d}$
	
\end{algorithm}
\renewcommand\baselinestretch{\linspread}\selectfont 

To summarize, the proposed RGPM-based algorithm is provided in Algorithm 1, and we analyze its computational complexity  as follows. The primary complexity of Algorithm 1  arises from calculating the gradient $\nabla\bar{f_a}(\mathbf{d}^k)$, whose complexity is on the order of  $\mathcal{O}(M_t^2n_1^2+M_t^2n_1n_2+M_t^2n_1n_3)$. Thus, the overall complexity of Algorithm 1 is $\mathcal{O}(KM_t^2n_1^2+KM_t^2n_1n_2+KM_t^2n_1n_3)$. In contrast, the complexity of a GA for problem \eqref{pro:2} is $\mathcal{O}(GNPM_t^2n_1^2+GNPM_t^2n_1n_2+GNPM_t^2n_1n_3)$ \cite{10.1007/s11042-020-10139-6}, where $G$, $N$, and $P$ denote the iteration number, population size, and chromosome length, respectively. To ensure convergence of the GA algorithm for solving this problem, we typically set $G$ on the order of $10^2$, $N$ and $P$ on the order of 
$10$. However, the proposed algorithm can achieve comparable performance with $K$ on the order of $10^2$, thus the proposed algorithm is computationally more efficient than GA.

\section{Numerical  results}
This section presents numerical results to validate our theoretical analysis on the minimum main lobe width and lower bounds of the ambiguity function, and evaluate the proposed algorithm's effectiveness in enhancing the MIMO radar performance. In our simulations, we consider an MA-enabled FH-MIMO radar system where the transmitter is equipped with $M_t = 8$ MAs and the receiver employs a conventional ULA with $M_r = 8$ antennas, spaced at half-wavelength intervals. The minimum distance between the MAs is set as $\frac{\lambda}{2}$. The system operates in the X-band with a center frequency of $f_c=8.2\mathrm{GHz}$ and bandwidth $B=8\mathrm{MHz}$. The PRI is set to $T_P=20\mu s$ and the pulse width is $T_w=6\mu s$. The FH code design follows the approach in \cite{eedaraOptimumCodeDesign2020}. For Algorithm 1, the convergence threshold and maximum iteration number are set to $T=10^{-2}$ and $K=150$, respectively. Other system parameters are as follow unless otherwise specified: $L/\lambda=7$, $Q=6$, $K=8$, $\Delta_f=1\mathrm{MHz}$, $\Delta_t=1\mathrm{\mu s}$, $f_{max}=10\mathrm{MHz}$, and a sampling frequency of $f_s = 160\mathrm{MHz}$.

\begin{figure}[t]
	\centering
	\includegraphics[width=0.4\textwidth]{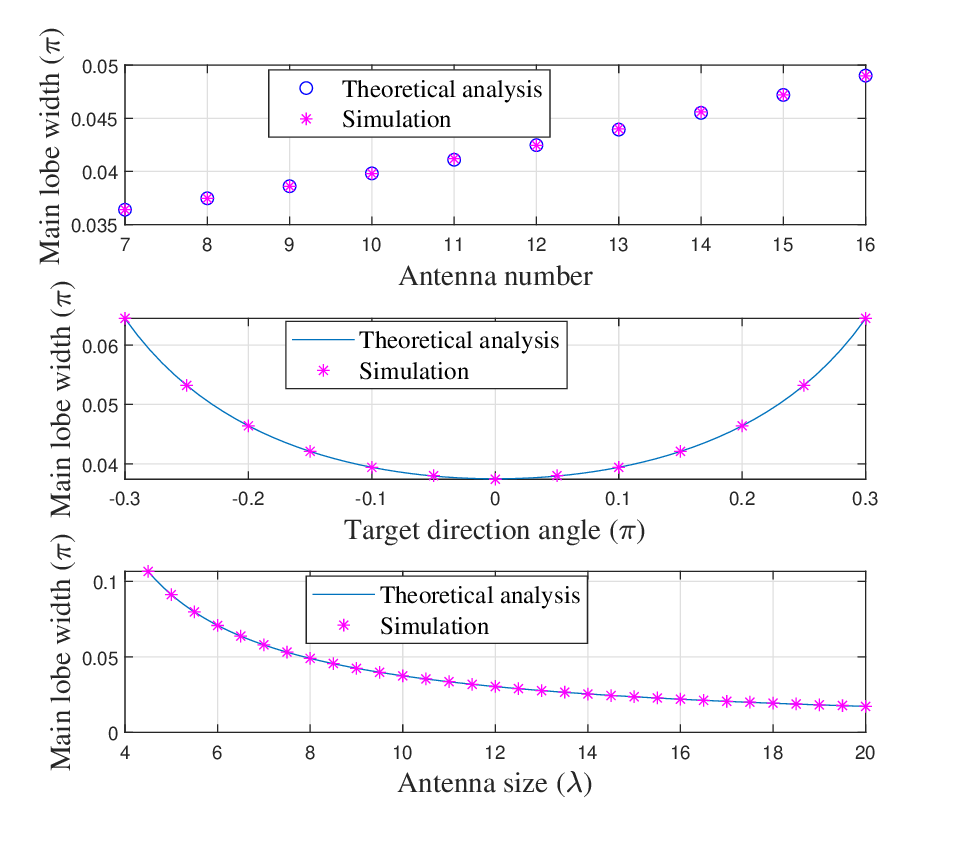}
	\vspace{-0.3cm}
	\caption{Minimum main lobe width versus antenna size $L$, antenna number $M_t$, and target direction angle $\theta$.}		
	\label{fig:figure22}  
	\vspace{-0.3cm}
\end{figure}

\begin{figure}[t]
	\centering
	\includegraphics[width=0.4\textwidth]{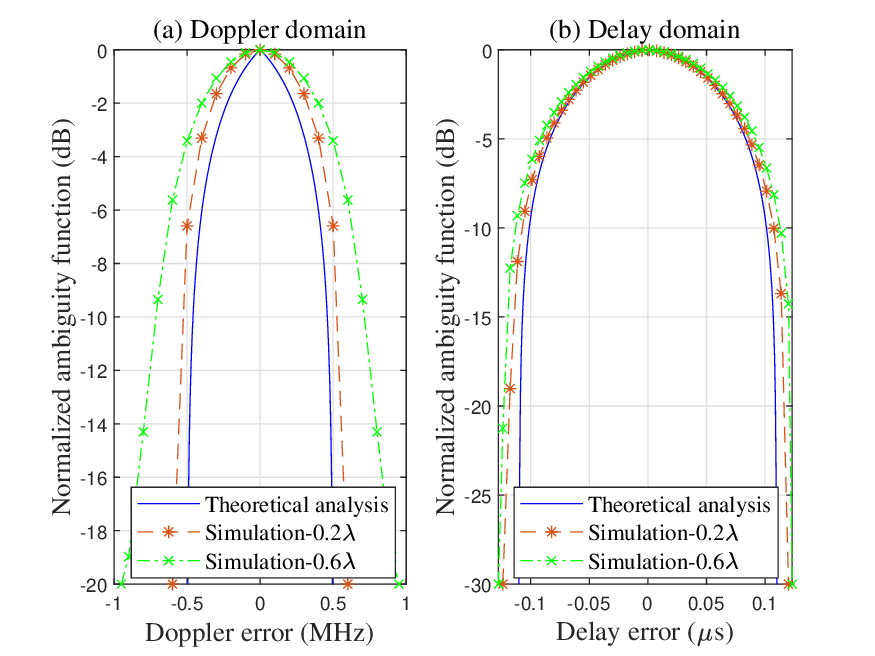}
	\caption{Lower bound of the ambiguity function in the Doppler and delay domains.}	
	\label{fig:.0723}
	\vspace{-0.3cm}
\end{figure}

First, in Fig. \ref{fig:figure22}, we validate our theoretical analysis of the relationship between $B_{min}$, $L/\lambda$, $M_t$, and $\theta$ as derived in Section \ref{section3-A}. We evaluate the main lobe width of the ambiguity function in the angular domain under the optimal antenna distribution \eqref{eq:0406_2} for different values of $L/\lambda$, $M_t$, $\theta$, and compare the results with those obtained from \eqref{eq:21}. The results show a strong agreement between the theoretical predictions and the numerical simulations for the minimum main lobe width, which validates that the expression in \eqref{eq:21} is accurate. This expression can therefore serve as a practical reference for system design under various antenna configurations.

\begin{figure}[t]
	\centering
	\includegraphics[width=0.4\textwidth]{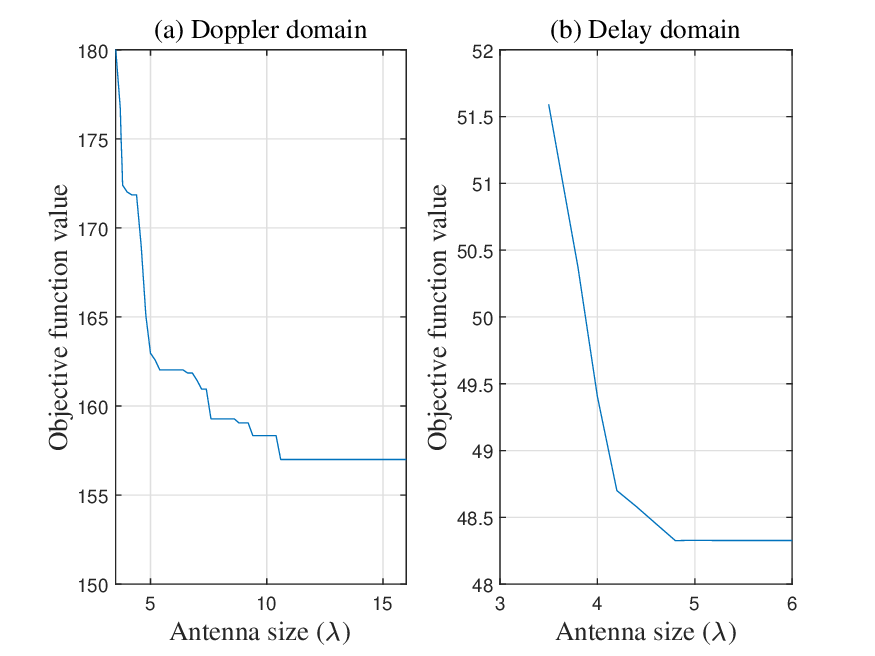}
	\caption{Objective function versus antenna size in the Doppler and delay domains.}	
	\label{fig:.0923}	 
\end{figure}

Next, we examine the lower bounds of the ambiguity function in both the Doppler and delay domains, as illustrated in Fig. \ref{fig:.0723}. As discussed in Section \ref{section3-B}, the term $e^{j2\pi J_{m,m',\theta}(\mathbf{d})}$ exhibits a periodic behavior, thus to simulate such a behavior, we set  $d_{t,i}\in[\frac{\lambda}{2},\frac{3\lambda}{2}]$, which ensures that $e^{j2\pi J_{m,m',\theta}(\mathbf{d})}$ has periodic variations. By discretizing $\{d_{t,i}\}$ with a step size of $0.2\lambda$ and $0.6\lambda$ and iterating through all possible values of $\{d_{t,i}\}$, we obtain the simulated lower bound of the ambiguity function. It can be observed from Fig. \ref{fig:.0723} that the main lobe width of the theoretical lower bound is narrower than that of the simulated lower bound, and their gap decreases as the step size decreases, which validate the  accuracy of our theoretical analysis on Theorems 2 and 3.


\begin{figure}[t]
	\centering
	\includegraphics[width=0.4\textwidth]{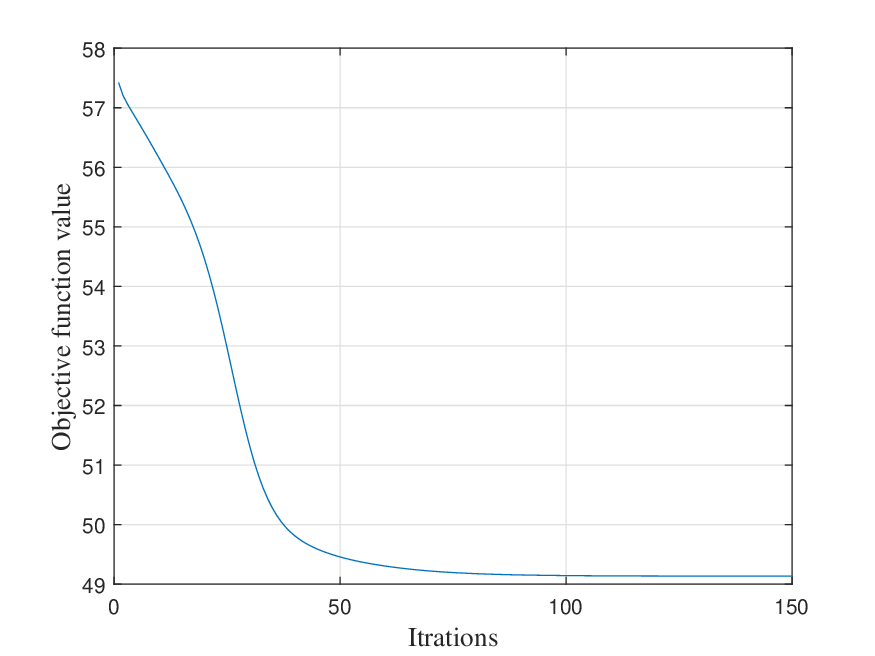}
	\caption{Convergence behavior of Algorithm 1.}	
	\label{figure:fig.0628_1}	 
	\vspace{-0.3cm}
\end{figure}

Fig. \ref{fig:.0923} illustrates the impact of the antenna size $L$ on  the objective function values in the Doppler and delay domains. It is seen that the objective function reaches a stable value when $L$ exceeds $4.8\lambda$ in the Doppler domain and $10.6\lambda$ in the delay domain, which is consistent with the discussion in Section \ref{section3-B}. This phenomenon suggests that increasing the antenna size too much does not yield additional performance gains.

In Fig. \ref{figure:fig.0628_1}, we illustrate the convergence behavior of the proposed RGPM-based algorithm by plotting the objective function value versus the number of iterations with $\alpha_1=0$, $\alpha_2=0$ and $\alpha_3=1$. From Fig. \ref{figure:fig.0628_1}, we can observe that the objective function value decreases monotonically and
converges to a stable value after about 50 iterations.

In Fig. \ref{figure:fig.12}, we investigate the radar performance gain  provided by MAs in the angular domain, where we set $\alpha_1=1$, $\alpha_2=0$, $\alpha_3=0$. Here, we compare the proposed algorithm with the GA \cite{10.1007/s11042-020-10139-6} and two benchmark antenna configurations: (1) the equidistant scheme with $d_{t,i}=\frac{\lambda}{2},i=1,2,\cdots,M_t-1$ that the existing system follows \cite{9969893}, and (2) the MMLWD scheme as specified in \eqref{eq:0406_2}. From Fig. \ref{figure:fig.12}, we observe that both  the proposed algorithm and GA achieve narrower main lobe widths and lower side lobe levels than the equidistant scheme. Additionally, the main lobe width attained by the proposed algorithm is very close to the theoretical bound achieved by the MMLWD scheme (given in \eqref{eq:21}), but the side lobe levels of the former is much lower than the latter. These results suggest that the proposed algorithm is able to effectively balance the  main lobe and side lobe performance, achieving near-optimal performance with low complexity.
\begin{figure}[t]
	\centering
	\includegraphics[width=0.4\textwidth]{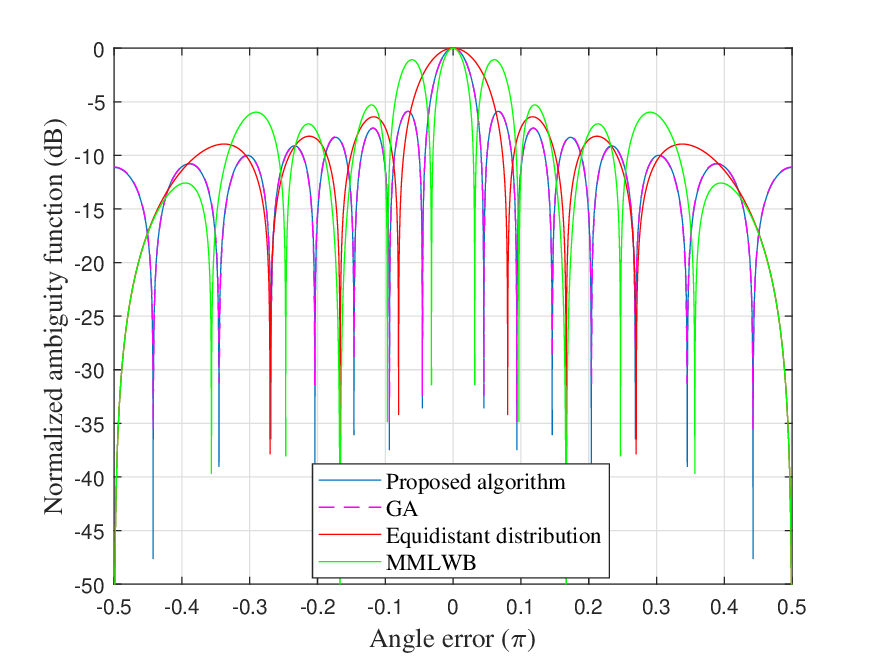}
	\caption{Ambiguity function in the angular domain with different antenna distributions.}		
	\label{figure:fig.12}	 
\end{figure}
\begin{figure}[t]
	\centering
	\includegraphics[width=0.4\textwidth]{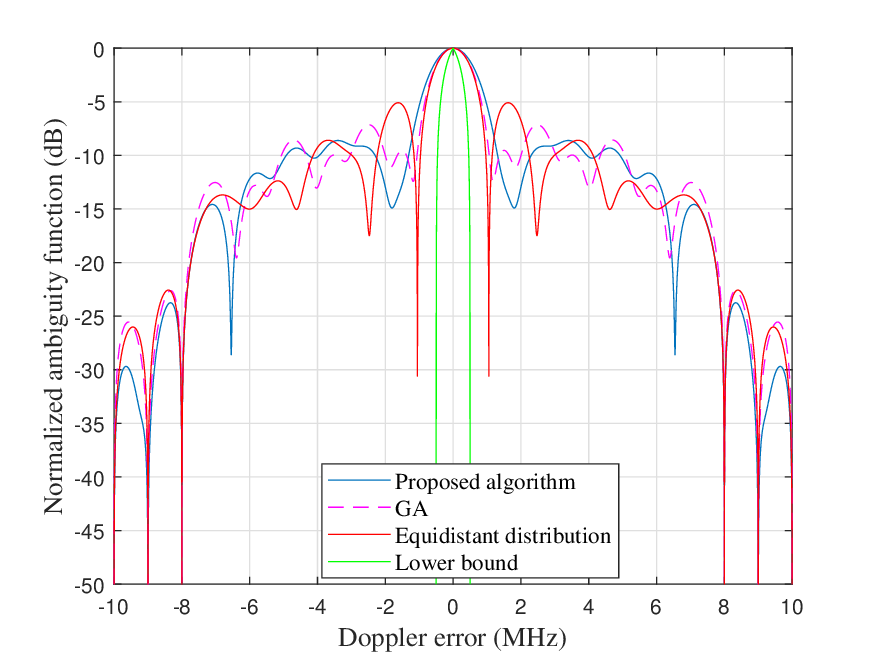}
	\caption{Ambiguity function in the Doppler domain with different antenna distributions.}	
	\label{figure:fig.4}	
	\vspace{-0.3cm}
\end{figure}
\begin{figure}[t]
	\centering
	\includegraphics[width=0.4\textwidth]{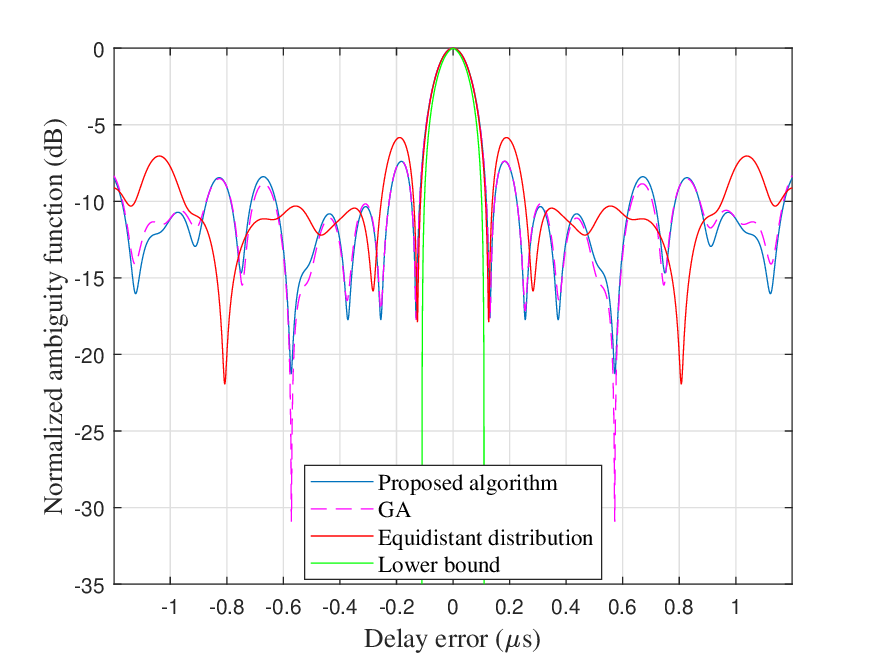}
	\caption{Ambiguity function in the delay domain with different antenna distributions.}	
	\label{figure:fig.3}	
\end{figure}
\begin{figure}[t]
	\centering
	\includegraphics[width=0.4\textwidth]{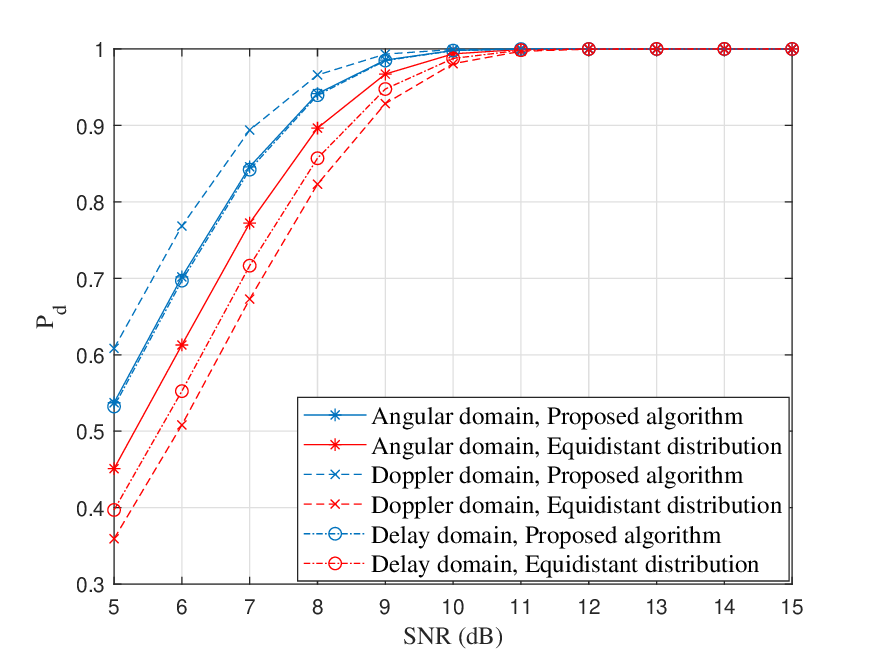}
	\caption{Detection probability versus SNR.}
	\label{figure:fig.0122}	
\end{figure} 
\begin{figure}[t]
	\centering
	\includegraphics[width=0.41\textwidth]{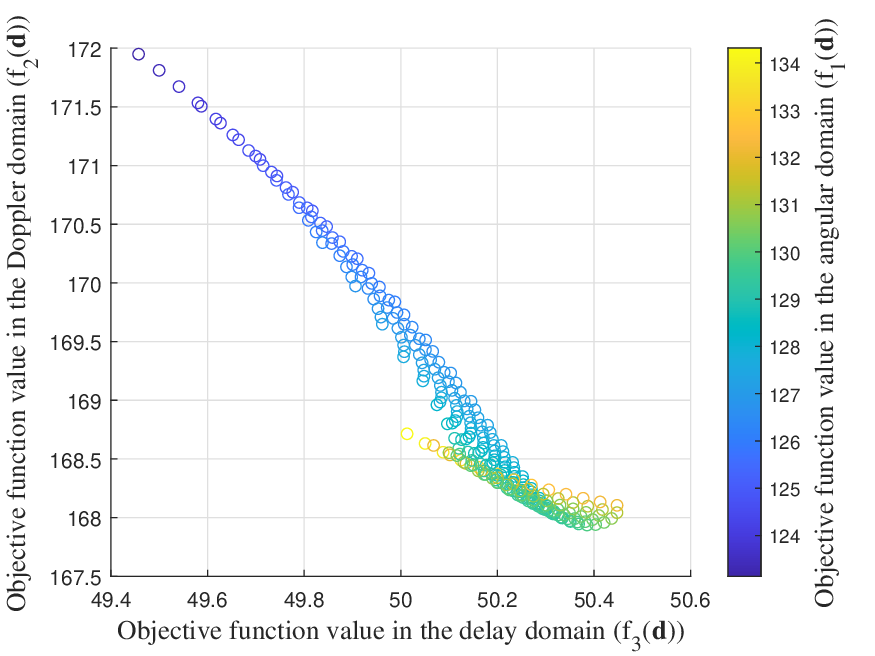}
	\caption{Objective function values achieved by different weighting coefficients.}	
	\label{figure:fig.0626}	
	\vspace{-0.3cm}
\end{figure}

Fig. \ref{figure:fig.4} and Fig. \ref{figure:fig.3} illustrate the ambiguity function in the Doppler and delay domains, under different antenna distributions, where we set ($\alpha_1=0$, $\alpha_2=1$, $\alpha_3=0$), and ($\alpha_1=0$, $\alpha_2=0$, $\alpha_3=1$), respectively. As mentioned in Section \ref{section3-B}, when $\theta = 0$, the MAs provide minimal performance gain in the Doppler and delay domains. Therefore, we focus on the performance gain achieved by the MAs when $\theta = \frac{\pi}{3}$. Here, we compare the proposed algorithm with the GA, and the equidistant scheme with $d_{t,i}=\frac{\lambda}{2},i=1,2,\cdots,M_t-1$ and the lower bound proposed in Section \ref{section3-B} are regarded as benchmarks. From Fig. \ref{figure:fig.4}, we observe that both the proposed algorithm and GA achieve significantly lower side lobe levels than the equidistant scheme with comparable main lobe width. This suggests that the proposed algorithm can improve the side lobe performance significantly in the Doppler domain with low complexity. Meanwhile, we can see that the main lobe width, both after optimization and under equidistant distribution, shows a noticeable deviation from the proposed lower bound. This discrepancy is primarily due to the relatively loose nature of the proposed lower bound. Additionally, in Fig. \ref{figure:fig.3}, we can see that both the proposed algorithm and GA achieve much lower side lobe levels than the equidistant scheme with the same main lobe width, which is very close to the lower bound. These observations indicate  that optimizing the antenna distribution has a limited effect on enhancing the main lobe performance of the ambiguity function in the Doppler and delay domains. However, it plays a significant role in improving its side lobe characteristics.

In Fig. \ref{figure:fig.0122}, we investigate the detection probability $P_d$ under a constant false alarm probability  $P_{fa}=0.01\%$. Here, we compare the proposed algorithm with  the equidistant scheme with $d_{t,i}=\frac{\lambda}{2}, i=1,2...,M_t-1$, and both systems employ the same FH code.  As we can see, the proposed algorithm achieves higher detection probability than the equidistant scheme. This implies that optimizing the antenna distribution can also improve the detection performance of the FH-MIMO radar system.

Finally, we  explore the tradeoff among the objective functions in the angular, Doppler, and delay domains by varying the weight parameters, and the results are shown in Fig. \ref{figure:fig.0626}, where higher temperature means larger $f_3(\mathbf{d})$ and vice versa. 
First, we can see that as $f_1(\mathbf{d})$ decreases, $f_1(\mathbf{d})$ also decreases. This implies that optimizing the antenna distribution can improve the properties of the ambiguity function in the delay and angular domains simultaneously. However, we also observe that as $f_2(\mathbf{d})$ decreases, both $f_3(\mathbf{d})$ and $f_1(\mathbf{d})$ will increase. This phenomenon indicates that the sensing resolution in the Doppler domain is in conflict with those in the delay and angular domains. 
Therefore, in practice, optimizing the distribution of the MA array to enhance the velocity resolution performance of FH-MIMO radar for target detection inevitably requires a tradeoff by compromising the resolution performance in terms of target range and angle. This suggests that to address various sensing requirements, the weight parameters corresponding to $f_j(\mathbf{d})$, $j=1,2,3$ should be properly adjusted to achieve a balanced performance across different domains. Such a limitation underscores the inflexibility of uniformly spaced fixed antenna arrays, as their rigid configuration fails to accommodate the varying sensing requirements, making it challenging to achieve better performance across diverse sensing domains.

\section{conclusion}
In this paper, we proposed a novel MA-enabled FH-MIMO radar system to improve the radar performance by exploiting the DoFs offered by antenna position optimization. In particular, we analyzed the relationship between the main lobe width of the ambiguity function in the angular domain and the antenna distribution. The expression of the theoretically minimum main lobe width was derived as a function of the target direction angle, the  antenna size and antenna number. Also, the optimal antenna distribution to achieve this minimum width was identified. Concurrently, we established lower bounds for the ambiguity function in  the Doppler and delay domains. To balance the main lobe and side lobe performance, we formulated an optimization problem for antenna distribution optimization and proposed a low-complexity RGPM-based algorithm to address it.
Numerical results showed that  the ambiguity function of our proposed system has lower side lobe levels and narrower main lobe width than that with conventional FPAs, and the accuracy and effectiveness of the proposed analysis and algorithm were also demonstrated.

 Future research  may consider the following aspects: 1. extending our research to a more general case where both transmit and receive antennas are movable; 2. designing MA-enabled DFRC/ISAC systems based on FH-MIMO radar; 3. designing an adaptive and reconfigurable hardware architecture that supports MA-enabled DFRC/ISAC systems.

{\appendix
\subsection{Proof of Theorem 2}\label{A.A_1}
	By assuming $\tau=0$ and $\theta=\theta'$, the ambiguity function \eqref{eq:15} can be simplified to
	\begin{equation}\label{eq:0611_3}
		\smaller
		\begin{aligned}
			\chi(0,v,\theta,\theta)=\!\!\!\sum_{m,m'=0}^{M_t-1}\!\kappa_{m,m'}(v)e^{j2\pi J_{m,m',\theta}(\mathbf{d})}	
		\end{aligned},
	\end{equation}
where
\begin{equation}\label{eq:1120_1}
	\smaller
	J_{m,m',\theta}(\mathbf{d})\triangleq(\sum_{i=0}^{m}d_{t,i}-\sum_{i=0}^{m'}d_{t,i})\sin\theta/\lambda,
\end{equation}
 and we define that 
\begin{equation}\label{eq:1120_2}
	\smaller
   \kappa_{m,m'}(v)\triangleq\sum_{q=0}^{Q-1}e^{j\pi\epsilon_{m,m',q}(v)}{\rm sinc}(\epsilon_{m,m',q}(v)),
 \end{equation}
we define that $\epsilon_{m,m',q}(v)\triangleq v\Delta_t-c_{m,q}+c_{m',q}$. When $m=m'$, we have $\epsilon_{m,m,q}(v)=v\Delta_t$, which means that $\kappa_{1,1}(v)e^{j2\pi J_{1,1,\theta}(\mathbf{d})}=\kappa_{i,i}(v)e^{j2\pi J_{i,i,\theta}(\mathbf{d})}$, where $i=1,2,\cdots,M_t-1$. Then, \eqref{eq:0611_3} can be further simplified to
	\begin{equation}\label{eq:0611_4}
		\smaller
		\begin{aligned}
			&\chi(0,v,\theta,\theta)= \varsigma_{v,\theta}(\mathbf{d})+M_t e^{j2\pi v\Delta t}{\rm sinc}(v\Delta t),
		\end{aligned}
	\end{equation}
	where 
	\begin{equation}\label{eq:0712_1}
		\smaller
		\varsigma_{v,\theta}(\mathbf{d}) =\!\!\!\!\sum_{\substack{m,m'=0\\m\neq m'}}^{M_t-1}\!\!\!\!\kappa_{m,m'}(v)e^{j2\pi J_{m,m',\theta}(\mathbf{d})}.
	\end{equation}
	Next, based on the triangle inequality and assuming that $L$ is large enough, we obtain
	\begin{equation}\label{eq:0611_5}
		\smaller
		\begin{aligned}
			&\lvert\chi(0,v,\theta,\theta)\lvert \geq \lvert M_t{\rm sinc}(v\Delta t)\lvert-\lvert\varsigma_{v,\theta}(\mathbf{d})\lvert,
		\end{aligned}
	\end{equation}
	where the equality holds if and only if the phases of $\varsigma_{v,\theta}(\mathbf{d})$ and $M_t e^{j2\pi v\Delta t}{\rm sinc}(v\Delta t)$ are opposite. From \eqref{eq:0611_5}, it is seen that the lower bound of $\lvert\chi(0,v,\theta,\theta)\lvert$ depends on the maximum value of $\lvert\varsigma_{v,\theta}(\mathbf{d})\lvert$. Besides, from \eqref{eq:0712_1}, we can see that $\varsigma_{v,\theta}(\mathbf{d})$ is the sum of $M_t^2-M_t$ complex numbers, and its maximum value can be obtained by resorting to the following triangle inequality
	\begin{equation}\label{eq:0611_6}
		\small
		\begin{aligned}
			&\lvert\varsigma_{v,\theta}(\mathbf{d})\lvert \\ &\leq\sum_{\substack{m,m'=0\\m\neq m'}}^{M_t-1}\sum_{q=0}^{Q-1}\lvert e^{j\pi\epsilon_{m,m',q}(v)}{\rm sinc}(\epsilon_{m,m',q}(v))e^{j2\pi J_{m,m',\theta}(\mathbf{d})}\large\lvert \\&=\sum_{\substack{m,m'=0\\m\neq m'}}^{M_t-1}\sum_{q=0}^{Q-1}\lvert {\rm sinc}(\epsilon_{m,m',q}(v))\large\lvert=\Xi^*(v).
				\vspace{-0.2cm}
		\end{aligned}
	\end{equation}
Therefore, when $\lvert M_t{\rm sinc}(v\Delta t)\lvert\geq\Xi^*(v)$, $\lvert\chi(0,v,\theta,\theta)\lvert\geq\lvert M_t{\rm sinc}(v\Delta t)\lvert-\Xi^*(v)$ holds, otherwise, we have $\chi^*(v)= 0$. This thus completes the proof.
	\vspace{-0.1cm}
	
\subsection{Proof of Theorem 3}\label{A.A_2}
	By assuming $v=0$ and $\theta=\theta'$, the ambiguity function \eqref{eq:15} can be simplified into 
\begin{equation}\label{eq:0613_4}
	\smaller
	\begin{aligned}
		&\chi(\tau,0,\theta,\theta)\\
		&=\!\!\!\!
		\begin{aligned}[t]\sum_{m,m'\!=0}^{M_t\!-\!1}\sum_{q,q'\!=0}^{Q-1}&\chi^{r}(\tau\!-\!(q'\!-\!q){\Delta}_t,-(c_{m'\!,q'}-c_{m\!,q})\Delta_f)\\&e^{-j2{\pi\Delta}_fc_{m',q'}\tau}e^{j2\pi J_{m,m',\theta}(\mathbf{d})}.
		\end{aligned}
	\end{aligned}
\end{equation}
Then, we can see that when $m=m'$, $J_{m,m',\theta}(\mathbf{d})=0$ holds, which implies that when $m=m'$, the antenna distribution will not affect ambiguity function. Therefore, depending on whether $m$ is equal to $m'$ or not, we transform \eqref{eq:0613_4} into $\chi(\tau,0,\theta,\theta)= \Lambda^d(\tau)+\Upsilon^d(\tau)$, where
\begin{equation}\label{eq:0613_5}
	\smaller
	\begin{aligned}
		&\Lambda^d(\tau)	\\&\begin{aligned}=\sum_{\substack{m,m'=0\\m\neq m'}}^{M_t\!-\!1}\sum_{q,q'\!=0}^{Q-1}
			&\!\chi^{r}(\tau\!-\!(q'\!-\!q){\Delta}_t,(c_{m,q}-c_{m',q'})\Delta_f)\\&e^{-j2{\pi\Delta}_fc_{m',q'}\tau}e^{j2\pi J_{m,m',\theta}(\mathbf{d})}.
		\end{aligned}
	\end{aligned}
\end{equation}
Next, we can see from \eqref{eq:0613_5} that the magnitude and phase of $\Lambda^d(\tau)$ are determined by the antenna distribution, while $\Upsilon^d(\tau)$ is a constant.
Based on the triangle inequality and assuming that $L$ is large enough, the minimum value of $\lvert\chi(\tau,0,\theta,\theta)\lvert$ is attained when $\Upsilon^d(\tau)$ and $\Lambda^d(\tau)$ are opposite, i.e,
$\chi(\tau,0,\theta,\theta)\geq \lvert\Upsilon^d(\tau)\lvert-\lvert\Lambda^d(\tau)\lvert$. Then, the maximum value of $\lvert\Lambda^d(\tau)\lvert$ can be obtained via
\begin{equation}\label{eq:0613_6}
	\smaller
	\begin{aligned}
		&\lvert\Lambda^d(\tau)\lvert\\&\leq\!\!\!\sum_{\substack{m,m'=0\\m\neq m'}}^{M_t\!-\!1}\sum_{q,q'\!=0}^{Q-1}
		\!\lvert\chi^{r}(\tau\!-\!(q'\!\!-\!q){\Delta}_t,(c_{m,q}\!-c_{m',q'})\Delta_f)\\&\qquad e^{j2\pi J_{m,m',\theta}(\mathbf{d})}e^{-j2{\pi\Delta}_fc_{m',q'}\tau}\lvert\\&=\!\sum_{\substack{m,m'=0\\m\neq m'}}^{M_t\!-\!1}\!\sum_{q,q'\!=0}^{Q-1}\!\lvert\chi^{r}(\tau\!-\!(q'\!-\!q){\Delta}_t,(c_{m,q}\!-c_{m',q'})\Delta_f)\lvert\\&=\Xi^d(\tau).
	\end{aligned}
\end{equation}
Therefore, we can obtain that when $\lvert\Upsilon^d(\tau)\lvert\geq\Xi^d(\tau)$, $\lvert\chi(\tau,0,\theta,\theta)\lvert\geq\lvert\Upsilon^d(\tau)\lvert-\Xi^d(\tau)$. Otherwise,  we have $\lvert\chi(\tau,0,\theta,\theta)\lvert\geq0$. This thus completes the proof.
	
\subsection{The formulation of $\lvert\chi(\tau,v,\theta,\theta')\lvert^2$}\label{A.A}
We define the component of $\chi(\tau,v,\theta,\theta')$ along the x-axis as $\chi(\tau,v,\theta,\theta')_x$
and along the y-axis as $\chi(\tau,v,\theta,\theta')_y$. According to the Pythagorean theorem, we have
\begin{equation}\label{eq:1023_9}
	\small
	\lvert\chi(\tau,v,\theta,\theta')\lvert^2=\chi(\tau,v,\theta,\theta')_x^2+\chi(\tau,v,\theta,\theta')_y^2,
\end{equation}
where
\begin{equation}\label{eq:1023_10}
	\small
	\begin{aligned}
			&\chi(\tau,v,\theta,\theta')_x\\&\begin{aligned}
				=\sum_{m,m'\!=0}^{M_t\!-\!1}\sum_{q,q'\!=0}^{Q-1}&\epsilon(\tau\!-\!(q'\!-\!q){\Delta}_t,v\!-\!(c_{m'\!,q'}-c_{m\!,q})\Delta_f)\\&\cos(\zeta_{m,m',q,q'}(\mathbf{h}))
			\end{aligned}
	\end{aligned}
\end{equation}
and
\begin{equation}\label{eq:1023_11}
	\small
		\begin{aligned}
		&\chi(\tau,v,\theta,\theta')_y\\&\begin{aligned}
			=\sum_{m,m'\!=0}^{M_t\!-\!1}\sum_{q,q'\!=0}^{Q-1}&\epsilon(\tau\!-\!(q'\!-\!q){\Delta}_t,v\!-\!(c_{m'\!,q'}-c_{m\!,q})\Delta_f)\\&\sin(\zeta_{m,m',q,q'}(\mathbf{h})),
		\end{aligned}
	\end{aligned}
\end{equation}
and we define that $\mathbf{h}\triangleq[\tau,v,\theta,\theta',\mathbf{d}^T,m]^T$, 
\begin{equation}\label{eq:0919_4}
	\small
	\epsilon(\tau,v)\triangleq\left\{\begin{aligned}
		&	\frac{{\Delta}_t\! -\!\!  \lvert\tau\lvert}{{\Delta}_t}{\rm sinc}(v(\Delta_ t\! -\! \lvert\tau\lvert)),\quad \lvert \tau\lvert\leq \Delta_t\\
		&0,\quad\quad\quad\quad\quad\quad\quad\quad\quad\quad\quad\quad\,  \mathrm{otherwise}
	\end{aligned}
	\right.,
\end{equation}
and
\begin{equation}\label{eq:0716_6}
	\small
	\begin{aligned}
		&\zeta_{m,m',q,q'}(\mathbf{h})\\&\begin{aligned}
			\triangleq&\pi(v\!-(c_{m'\!,q'}-c_{m\!,q}) \Delta_f)(\Delta_t -\! \tau\!+\!(q'\!-\!q){\Delta}_t)\\&-2\pi\Delta_fc_{m',q'}\tau+2\pi(\sum_{i=0}^{m}d_{t,i}\sin(\theta)-\sum_{i=0}^{m'}d_{t,i}\sin(\theta'))/\lambda.
		\end{aligned}
	\end{aligned}
\end{equation}
\subsection{The expression of $\frac{\partial\lvert\chi(\tau,v,\theta,\theta')\lvert^2}{\partial d_{t,x}}$}\label{A.B}
According to the rule of partial derivatives, we have
\begin{equation}\label{eq:1023_6}
	\small
	\begin{aligned}
	\frac{\partial\lvert\chi(\tau,v,\theta,\theta')\lvert^2}{\partial d_{t,x}}=&
		2\bigg(\chi(\tau,v,\theta,\theta')_x\frac{\partial\chi(\tau,v,\theta,\theta')_x}{\partial d_{t,x}}\\&+\chi(\tau,v,\theta,\theta')_y\frac{\partial\chi(\tau,v,\theta,\theta')_y}{\partial d_{t,x}}\bigg),
\end{aligned}
\end{equation}
where
\begin{equation}\label{eq:1023_7}
	\small
	\begin{aligned}
		&\frac{\partial\chi(\tau,v,\theta,\theta')_x}{\partial d_{t,x}}\\&\begin{aligned}
		=-\!\!\!\!\!\!\sum_{m,m'\!=0}^{M_t\!-\!1}\sum_{q,q'\!=0}^{Q-1}&\epsilon(\tau\!-\!(q'\!-\!q){\Delta}_t,v\!-\!(c_{m'\!,q'}-c_{m\!,q})\Delta_f)\\&\sin(\zeta_{m,m',q,q'}(\mathbf{h}))\frac{\partial\zeta_{m,m',q,q'}(\mathbf{h})}{\partial d_{t,x}}
		\end{aligned}
	\end{aligned}
\end{equation}
and
\begin{equation}\label{eq:1023_8}
	\small
	\begin{aligned}
	&\frac{\partial\chi(\tau,v,\theta,\theta')_y}{\partial d_{t,x}}\\&\begin{aligned}
		=\sum_{m,m'\!=0}^{M_t\!-\!1}\sum_{q,q'\!=0}^{Q-1}&\epsilon(\tau\!-\!(q'\!-\!q){\Delta}_t,v\!-\!(c_{m'\!,q'}-c_{m\!,q})\Delta_f)\\&\cos(\zeta_{m,m',q,q'}(\mathbf{h}))\frac{\partial\zeta_{m,m',q,q'}(\mathbf{h})}{\partial d_{t,x}}.
	\end{aligned}
\end{aligned}
\end{equation}
From \eqref{eq:0716_6}, we can obtain that $\frac{\partial\zeta_{m,m',q,q'}(\mathbf{h})}{\partial d_{t,x}}$ can be expressed as
\begin{equation}\label{eq:0520_3}
	\small
	\frac{\partial\zeta_{m,m',q,q'}(\mathbf{h})}{\partial d_{t,x}}={2\pi}(\gamma(x,m)\sin\theta-\gamma(x,m')\sin\theta')/\lambda,
\end{equation}
where
\begin{equation}\label{eq:0520_4}
	\small
	\gamma(x,m)\triangleq \left\{\begin{aligned}
		&1,\quad x\geq m\\
		&0,\quad x<m
	\end{aligned}
	\right
	..
\end{equation}
}

\vspace{-0cm}
\bibliographystyle{IEEEtran} 
\bibliography{ref1.bib}

\end{document}